\newcommand{\beq}{\begin{equation}}
\newcommand{\eeq}{\end{equation}}
\DeclarePairedDelimiter{\abs}{\lvert}{\rvert}
\DeclarePairedDelimiter{\norm}{\lVert}{\rVert}
\DeclareMathOperator{\csch}{csch}
\DeclareMathOperator{\sech}{sech}
\theoremstyle{definition}
\newtheorem{lem}{Lemma}
\begin{document}

\title{Almost Perfect Metals in One Dimension}

\author{Chaitanya Murthy}
\affiliation{Department of Physics, University of California, Santa Barbara, CA 93106, USA}

\author{Chetan Nayak}
\affiliation{Department of Physics, University of California, Santa Barbara, CA 93106, USA}
\affiliation{Microsoft Quantum, Station Q, University of California, Santa Barbara, CA 93106, USA}

\begin{abstract}

We show that a one-dimensional quantum wire with as few as 2 channels of interacting fermions can host metallic states of matter that are stable against all perturbations up to $q^\text{th}$-order in fermion creation/annihilation operators for any fixed finite $q$.
The leading relevant perturbations are thus complicated operators that are expected to modify the physics only at very low energies, below accessible temperatures. 
The stability of these non-Fermi liquid fixed points is due to strong interactions between the channels, which can (but need not) be chosen to be purely repulsive. 
Our results might enable elementary physical realizations of these phases.

\end{abstract}

\maketitle


\textbf{Introduction.}
Metallic states of matter are gapless and often unstable to either insulating behavior or superconductivity. 
This is especially true in one-dimensional systems,
where the localizing effects of disorder are particularly strong \cite{Lee1985}.
For a single channel (i.e.~a single propagating mode of each chirality at the Fermi energy), 
disorder-induced localization can only be avoided when the interaction is strongly attractive, 
while proximity-induced superconductivity can only be avoided when it is strongly repulsive \cite{Giamarchi2003}.
The situation is more complicated---and much more interesting---when there are multiple channels. 
We will show that, surprisingly, even for $N=2$ channels, it is possible to have a metallic state that is stable against all perturbations up to $q^\text{th}$-order in fermion creation/annihilation operators for any fixed finite $q$ (but not $q = \infty$), which we call \emph{(absolute) $q$-stability}.

Gapless phases of interacting fermions in one dimension are described at low energies by Luttinger liquid (LL) theory \cite{Haldane1981}.
They exhibit a remarkable and universal phenomenology that distinguishes them from Fermi liquids, but this is often obscured in experiments due to dimensional crossover, ordering, or localization \cite{Giamarchi2003}.
Thus, a physically realizable stable LL is not only interesting as a matter of principle, but also for the practical reason that it would provide a useful experimental platform to study non-Fermi liquid physics.

For $N=\infty$, it was shown two decades ago in Refs.~\cite{Kivelson1998,Golubovic1998,OHern1998,OHern1999, Emery2000,Vishwanath2001,Mukhopadhyay2001,Sierra2002} that there exist ``sliding Luttinger liquid'' phases which are stable against many, but not all, low-order perturbations; it was argued that the relevant perturbations are likely to have small bare values 
\footnote{We restrict attention in this paper to systems with short-ranged interactions. Long-ranged interactions can also stabilize a Luttinger liquid against a $2k_F$ potential and disorder \cite{Dora2016}}. 
More recently, it was discovered that it is possible for a one-dimensional metal to be stable against \emph{all} non-chiral perturbations (without restriction on the order) \cite{Plamadeala2014}. 
An explicit construction was given for $N = 23$ which exploited the properties of integral quadratic lattices. 
The present work shows that a slight relaxation of the condition of complete stability to the weaker condition of $q$-stability brings the required number of channels down from $23$ to $2$, thereby greatly increasing the chances of experimental realization.

The basic observation underlying the results of this paper and of Ref.~\cite{Plamadeala2014} is that the possible perturbations of an $N$-channel LL can be represented as lattice points in a fictitious $2N$-dimensional space equipped with two different metrics: 
the mixed-signature $(N,N)$ metric $\mathrm{diag}(-\mathbb{I}_N , \mathbb{I}_N)$ and the Euclidean metric $\mathbb{I}_{2N}$,
where $\mathbb{I}_N$ is the $N \times N$ identity matrix.
The mixed-signature interval from the origin to a lattice point measures the chirality of the associated perturbation, while the Euclidean interval measures its scaling dimension; points sufficiently far from the origin are irrelevant in the renormalization group (RG) sense.
The lattice is naturally graded into ``shells'' consisting of perturbations of a given order; low-order perturbations belong to the inner shells.
The effect of interactions is to deform the lattice by an $SO(N,N)$ transformation 
\footnote{
The Lie group $SO(N,N)$ consists of all matrices $A \in \mathbb{R}^{2N \times 2N}$ that satisfy $AKA^T = K$ and $\det A = 1$, where $K = \mathrm{diag}(-\mathbb{I}_N, \mathbb{I}_N)$.
}.
For $N=1$, the deformation is a Lorentz boost that is ``aligned'' with the lattice; 
such a boost unavoidably pulls one of the innermost lattice points closer to the origin, 
enhancing the susceptibility of the system to either localization or induced superconductivity.
For $N \geq 2$, on the other hand, the boosts can be ``misaligned'' with the lattice planes in such a way that all lattice points in the innermost $q$ shells are pushed away from the origin, making the corresponding perturbations irrelevant.

Remarkably, these absolutely $q$-stable phases can occur even for purely repulsive interactions. 
Two-channel repulsive LLs can occur in a number of different contexts. 
One simple example, with sufficient generality to permit the phases described here, is a single-spinful-channel quantum wire with strong spin-orbit coupling. 
In this case, the two Fermi points of each chirality have different Fermi momenta and velocities, and the interactions between the densities at the different Fermi points are not excessively constrained by symmetries.
Our construction shows that, for any fixed finite $q$, there exist purely repulsive local interactions for which such a metallic state is absolutely $q$-stable.

\vspace{0.5em}
\textbf{Model and Definitions.}
Consider a system of interacting fermions in a 1D quantum wire.
At low energies, the effective theory of the system involves $2N$ chiral spinless Dirac fermions $\psi_I$, where $\psi_I^{\dagger}$ ($\psi_{I+N}^{\dagger}$) creates a right-moving (left-moving) excitation about the Fermi point $k_{F,I}$ ($k_{F,I+N}$), with Fermi velocity $v_I > 0$ ($v_{I+N} < 0$).
The index $I$ distinguishes different bands, accounting for both spin and quantization of the transverse motion.
The effective action is given by $S_{\text{eff}}=S_0 + S_{\text{int}} + S_{\text{pert}}$, where
\beq
S_0 + S_{\text{int}} = \int dt \, dx \hspace{1pt}  
\Bigl[ \psi_I^{\dagger} \hspace{1pt} i(\partial_t + v_I \partial_x) \hspace{1pt} \psi^{\phantom{\dagger}}_I 
- \, U_{IJ} \hspace{1pt} \rho_I \hspace{1pt}  \rho_J \Bigr] .
\eeq
Here, the indices $I,J$ are implicitly summed from $1$ to $2N$, $\rho_I \equiv \psi^{\dagger}_I \psi^{\phantom{\dagger}}_I$, and the real symmetric $2N \times 2N$ matrix $U$ parametrizes all density-density interactions.
All other interaction terms, as well as any quadratic terms
accounting for dispersion nonlinearities, are packaged into $S_{\text{pert}}$.
If the system is perturbed in any way, 
for instance by introducing disorder or by proximity-coupling the wire to an external 3D superconductor, 
the appropriate terms are also included in $S_{\text{pert}}$.

The first part of the action, $S = S_0 + S_{\text{int}}$, describes a gapless $N$-channel Luttinger liquid.
$S$ can be treated non-perturbatively via the method of bosonization \cite{VonDelft1998}.
Introducing a chiral boson $\phi_I$ for each chiral fermion $\psi_I$, we obtain the bosonic representation
\beq
\label{eq:S_LL}
S = \frac{1}{4\pi} \int dt \, dx \, \Big[ K_{IJ} \partial_t \phi_I \partial_x \phi_J - V_{IJ} \partial_x \phi_I \partial_x \phi_J \Big] ,
\eeq
with $K = \mathrm{diag}(-\mathbb{I}_N , \mathbb{I}_N)$ and $V_{IJ} = \abs{v_I} \delta_{IJ} + \frac{1}{\pi} U_{IJ}$.
The fermion operators are given in terms of the bosons by $\psi_I^{\dagger} =  (2\pi a)^{-1/2} e^{\mp i \phi_I} \gamma_I$, where the sign is $-$ ($+$) for $I \leq N$ ($I > N$), $a$ is a short-distance cutoff,
and the Klein factors $\gamma_I$ satisfy $\gamma_I \gamma_J = - \gamma_J \gamma_I$ for $I \neq J$.

The LL action (\ref{eq:S_LL}) is a fixed point under RG flow,
parameterized by the symmetric positive definite $2N \times 2N$ matrix $V$.
Our results are based on a systematic linear stability analysis of these fixed points $S[V]$.
A generic perturbation of $S[V]$ has the form
\beq
\label{eqn:generic-pert}
S' = \int dt \, dx \, \Big[ \xi(x) \mathcal{O}(t,x) + \xi^*(x) \mathcal{O}^{\dagger}(t,x) \Big] ,
\eeq
where $\mathcal{O}$ is a local bosonic operator and $\xi(x)$ is an appropriate function.
It is natural to distinguish three types of perturbation: (i) \emph{global} perturbations, in which $\xi(x) = g e^{i\alpha}$ is constant in space, (ii) \emph{random} ones, in which $\xi(x)$ is a Gaussian random variable with $\overline{\xi(x)} = 0$ and $\overline{\xi^*(x) \xi(x')} = \sqrt{g} \, \delta(x-x')$, and (iii) \emph{local} ones, in which $\xi(x) = g e^{i\alpha} \delta(x-x_0)$ acts only at a point.
In each case, the linearized RG equation specifying how the coupling constant $g$ changes with the energy scale $\Lambda$ is
\beq
\label{eq:lin_rg}
\frac{d\ln g}{d\ln \Lambda} = \Delta - d_{\text{eff}},
\eeq
where $d_{\text{eff}} = 2,\frac{3}{2},1$ for global, random, or local perturbations respectively, and where $\Delta$ is the scaling dimension of $\mathcal{O}$.
The perturbation is relevant if $\Delta < d_{\text{eff}}$, marginal (at tree-level) if $\Delta = d_{\text{eff}}$, and irrelevant if $\Delta > d_{\text{eff}}$.

The quadratic action (\ref{eq:S_LL}) can be destabilized by localization or the opening of a gap, either of
which can be caused by a relevant perturbation (\ref{eqn:generic-pert})
if $\mathcal{O}$ is a
\emph{vertex operator} $\mathcal{O}_\mathbf{m} \equiv e^{i m_I \phi_I}$, where $\mathbf{m} \in \mathbb{Z}^{2N}$ (we suppress cutoff factors for brevity).
The operator $\mathcal{O}_\mathbf{m}$ is bosonic if and only if its \emph{conformal spin} $K(\mathbf{m}) = \tfrac{1}{2} \mathbf{m}^T \! K \mathbf{m}$ is an integer. 
At the fixed point $S[V]$, the scaling dimension of $\mathcal{O}_\mathbf{m}$ is
\beq
\Delta(\mathbf{m}) = \tfrac{1}{2} \mathbf{m}^T \! M \mathbf{m} ,
\eeq
where $M =  A^T\! A$, and $A \in SO(N,N)$ diagonalizes the interaction matrix, $A V \! A^T = \text{diag}(u_i)$.
Given any $M$, the set of corresponding interaction matrices can be parameterized as
\beq
\label{eq:V_param}
V = M^{-1/2} \begin{bmatrix} X & 0 \\ 0 & Y \end{bmatrix} M^{-1/2} ,
\eeq
where $X$ and $Y$ are arbitrary symmetric positive definite $N \times N$ matrices,
and $M^{-1/2}$ is the unique positive definite square root of $M^{-1}$ \cite{SuppMat}.
This parameterization of $V$ is closely related to, but distinct from, the one used in Refs.~\cite{Moore1998,Xu2006}.

We define two notions of stability of a LL fixed point $S[V]$.
We say that it is \emph{$\infty$-stable} if all non-chiral (i.e. $K(\mathbf{m}) = 0$) perturbations are irrelevant at $S[V]$.
We say that it is \emph{absolutely $\infty$-stable} if all chiral (i.e. $K(\mathbf{m}) \neq 0$) perturbations are irrelevant as well
\footnote{
Chiral perturbations cannot themselves lead to an energy gap, but one might worry that such perturbations, if relevant, will grow large enough to affect the scaling dimensions of non-chiral operators.
}.
The scaling dimensions are continuous functions of $V$, so each stable fixed point belongs to a \emph{stable phase}.
$\infty$-stable phases cannot exist when the LL has only $N=1$ channel. 
They can be shown to exist---by explicit construction---when $N \geq 23$ \cite{Plamadeala2014}.
In the intermediate range, $1 < N < 23$, the existence of $\infty$-stable phases remains an open question at this time.
Meanwhile, upper bounds on the density of high-dimensional sphere packings \cite{Cohn2003} imply that absolutely $\infty$-stable phases cannot exist with $N < 11$ channels. 
It is again possible to show---by explicit construction---that they do exist when $N$ is sufficiently large.
For completeness, we discuss these matters in more detail in the Supplemental Material~\cite{SuppMat}.

From a physical point of view, however, the notions of stability introduced above are unnecessarily restrictive.
If there are physical reasons to expect the bare value $g_0$ of a relevant coupling to be small, 
then although this coupling will eventually destabilize the metallic state, this will only happen at very low temperatures
$T \sim \Lambda_0 \, g_0^{1/(d_{\text{eff}} - \Delta)}$.
We expect $g_0$ to be small for perturbations that are sufficiently high-order in the fermion fields. 
This is based on the assumption that such terms are not appreciably generated during RG flow from the underlying microscopic theory (which only has terms up to quartic order) to the effective theory $S_{\text{eff}}$ which describes the system at energies $\sim \Lambda_0$.

Each vertex operator $\mathcal{O}_\mathbf{m}$ in the bosonic formulation corresponds to terms that are $\abs{\mathbf{m}}^{\text{th}}$-order in the fermion fields, where $\abs{\mathbf{m}} \equiv \sum_{I=1}^{2N} \abs{m_I}$.
We say that the fixed point $S[V]$ is \emph{$q$-stable} if $q$ is the largest integer such that all non-chiral perturbations of $S[V]$ with $\abs{\mathbf{m}} \leq q$ are irrelevant.
We say that it is \emph{absolutely $q$-stable} if $q$ is the largest integer such that \emph{all} perturbations with $\abs{\mathbf{m}} \leq q$ are irrelevant.
Our earlier notions of stability are the limiting cases $q = \infty$.
Based on the comments in the previous paragraph, it is plausible that, in any real system, there will
be no observable difference between $q$-stability 
and $\infty$-stability 
at accessible temperatures if $q$ is sufficiently large
\footnote{Even if this assumption turns out to be false, a $q$-stable phase can be expected to exhibit novel and exotic instabilities, since all the usual instabilities correspond to operators with small $\abs{\mathbf{m}}$.}.

\vspace{0.5em}
\textbf{Relation to Integral Quadratic Lattices.}
As described in the Introduction, there is a beautiful geometric picture associated with all of this.
To any interaction matrix $V$ diagonalized by $A \in SO(N,N)$, we associate a lattice $A \mathbb{Z}^{2N} \equiv \{ A \mathbf{m} \mid \mathbf{m} \in \mathbb{Z}^{2N} \} $ in a fictitious $\mathbb{R}^{2N}$ equipped with two metrics: 
the mixed-signature $(N,N)$ metric $K = \mathrm{diag}(-\mathbb{I}_N , \mathbb{I}_N)$ and the Euclidean metric $\mathbb{I}_{2N}$.
The scaling dimension of an operator is equal to half the Euclidean interval from the origin to the associated lattice point, $\Delta(\mathbf{m}) = \frac{1}{2} \norm{A \mathbf{m}}^2$.
There are three ``spheres of relevance'' centered at the origin, with Euclidean radii $\sqrt{2d_{\text{eff}}} = 2, \sqrt{3}, \sqrt{2}$;
any lattice point inside these spheres represents a perturbation that is relevant if it is global, random, or local, respectively.
The chirality (i.e.~conformal spin) of an operator is equal to half the mixed-signature interval from the origin to the associated lattice point;
chiral operators correspond to ``spacelike'' or ``timelike'' intervals, and non-chiral operators to ``lightlike'' (null) intervals.
The lattice is naturally graded into ``shells'' of fixed $\abs{\mathbf{m}} \equiv \sum_{I=1}^{2N} \abs{m_I}$, 
which equals the order of the corresponding perturbation $\mathcal{O}_\mathbf{m}$ in the fermion fields.
Bosonic operators have even $\abs{\mathbf{m}}$.
The fixed point $S[V]$ is $q$-stable if no lightlike even lattice point in the innermost $q$ shells falls within the sphere of Euclidean radius $2$ centered at the origin.
It is absolutely $q$-stable if the same also holds for spacelike and timelike even lattice points in these shells.

\begin{figure}[t]
    \centering
    \includegraphics[width=0.8\columnwidth]{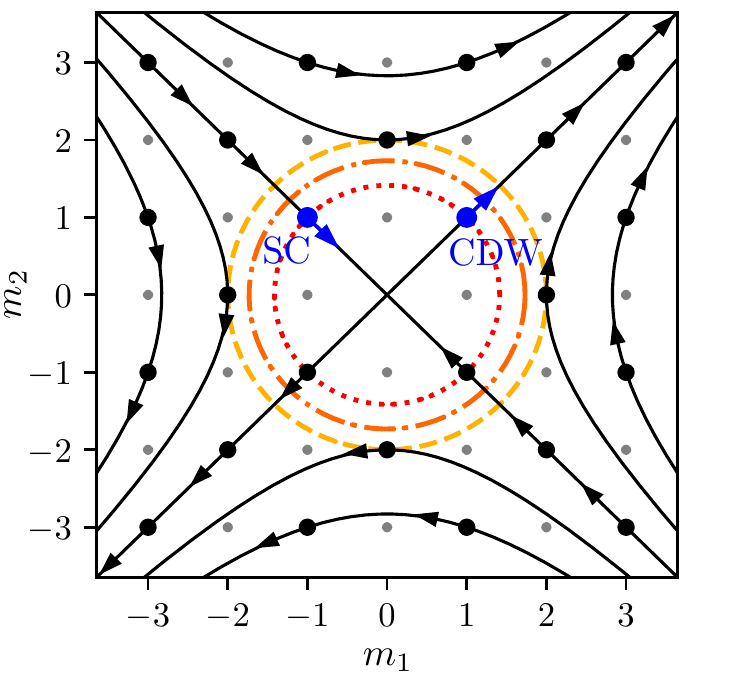}
    \caption{Lattice of perturbations for an $N=1$ channel LL.
    Large dots are bosonic operators, while small gray dots are fermionic ones; the latter can be ignored.
    A perturbation is relevant if it falls within the appropriate circle
    ({\color[rgb]{1,0.7,0} \hdashrule[0.4ex][x]{20pt}{1.2pt}{4pt 1.5pt}}~global,
    {\color[rgb]{1,0.4,0} \hdashrule[0.4ex][x]{22pt}{1.2pt}{6pt 1.5pt 1.2pt 1.5pt}}~random,
    {\color[rgb]{1,0,0} \hdashrule[0.4ex][x]{19pt}{1.2pt}{1.2pt 1.5pt}}~local).
    The lattice shown is $\mathbb{Z}^2$, corresponding to the noninteracting fixed point, $\lambda=0$.
    With attractive interactions, $\lambda < 0$, the lattice deforms as indicated by the flow field. 
    With repulsive interactions, $\lambda > 0$, the flow is in the opposite direction.
    }
    \label{fig:1channel_boosts}
\end{figure}

Figure~\ref{fig:1channel_boosts} illustrates these ideas in the simplest case, that of $N=1$ channel.
The matrix $A \in SO(1,1)$ then describes a boost (hyperbolic rotation) of the plane, and can be parameterized as
$A(\lambda) = e^{-(\lambda/2) \sigma_x}$.
At the noninteracting fixed point, $\lambda = 0$, the most relevant perturbations couple $\mathcal{O}_{\text{SC}} \equiv \psi^{\dagger}_R \psi^{\dagger}_L \sim e^{i (-\phi_1 + \phi_2)}$ to an external 3D superconductor, or 
$\mathcal{O}_{\text{CDW}} \equiv \psi^{\phantom{\dagger}}_R \psi^{\dagger}_L \sim e^{i (\phi_1 + \phi_2)}$ to a periodic potential.
The corresponding lattice points are $\mathbf{m} = (-1,1)$ and $\mathbf{m} = (1,1)$ respectively.
When $\lambda = 0$, both operators have $\Delta = 1$, so both perturbations are relevant; the associated instabilities are induced superconductivity (SC) and a pinned charge density wave (CDW) respectively.
When interactions are turned on, so that $\lambda \neq 0$, the lattice deforms to $A(\lambda) \mathbb{Z}^2$ as indicated in the Figure.
Thus, $\lambda < 0$ makes $\mathcal{O}_{\text{CDW}}$ less relevant but $\mathcal{O}_{\text{SC}}$ more relevant, while $\lambda > 0$ does the opposite.
The interaction matrix $V$ can be parametrized as in Eq.~(\ref{eq:V_param}), with $X = u_1 > 0$, $Y = u_2 > 0$, and $M^{1/2} = A(-\lambda) = e^{(\lambda/2) \sigma_x}$;
its off-diagonal element is $V_{12} = \frac{1}{2}(u_1+u_2) \sinh{\lambda}$.
Thus, $\lambda < 0$ ($\lambda > 0$) corresponds to attractive (repulsive) interactions, and we reproduce the well-known phenomenology of the 1-channel Luttinger liquid \cite{Giamarchi2003}.
Clearly, stability is impossible with just $N=1$ channel.

\vspace{0.5em}
\textbf{Stable Luttinger Liquids.}
We now turn to the general case of $N$ channels.
Our approach is to study all possible scaling dimension matrices $M$. After we have identified some $M$'s of interest, we reconstruct the corresponding $V$'s using Eq.~(\ref{eq:V_param}).

A useful structure theorem for $SO(N,N)$, called the \emph{hyperbolic cosine-sine (CS) decomposition} \cite{Higham2003}, ensures that $M$ can be written as a product of independent boosts in orthogonal planes:
\beq
\label{eq:hyp_CS}
M = \begin{bmatrix}
Q_1^T & 0 \\ 0 & Q_2^T
\end{bmatrix}
\begin{bmatrix}
C & -S \\ -S & C
\end{bmatrix}
\begin{bmatrix}
Q_1 & 0 \\ 0 & Q_2
\end{bmatrix} ,
\eeq
where $Q_1,Q_2 \in SO(N)$, $C = \text{diag}(\cosh{\lambda_i})$, and $S = \text{diag}(\sinh{\lambda_i})$, with $\lambda_i \in \mathbb{R}$, $i = 1,2,\dots,N$.

The \emph{crucial geometric fact} distinguishing $N \geq 2$ from $N=1$ is that the boost planes of $M$ can be rotated out of alignment with the lattice planes of $\mathbb{Z}^{2N}$ by suitably chosen $Q_i$.
As a consequence, for $N \geq 2$, absolutely $q$-stable phases exist for any finite $q$.
This assertion can be proven quite simply, as follows.

Take $\lambda_i = \lambda$ in expression (\ref{eq:hyp_CS}) for $M$, and let $\mathbf{m} = (\mathbf{m}_R, \mathbf{m}_L)$, with $\mathbf{m}_{R/L} \in \mathbb{Z}^N$.
If either $\mathbf{m}_R$ or $\mathbf{m}_L$ vanishes, then $\Delta(\mathbf{m}) = \frac{1}{2} \norm{\mathbf{m}}^2 \cosh{\lambda} > 2$ for $\lambda > \mathrm{arccosh} \, 2$.
If neither $\mathbf{m}_R$ nor $\mathbf{m}_L$ vanishes, we can rewrite the inequality $\Delta(\mathbf{m}) > 2$ as
\beq
\label{eq:stable_ineq}
\abs{\hat{\mathbf{m}}_R^T Q \hat{\mathbf{m}}_L}  < f\Big(\frac{\norm{\mathbf{m}_R}}{\norm{\mathbf{m}_L}}\Big) \coth{\lambda} - \frac{2 \csch{\lambda}}{\norm{\mathbf{m}_R} \norm{\mathbf{m}_L}} ,
\eeq
where $Q \equiv Q_1^T Q_2 \in SO(N)$, $\mathbf{\hat{m}}_{\nu} \equiv \mathbf{m}_{\nu} / \norm{\mathbf{m}_{\nu}}$ and $f(x) \equiv \frac{1}{2}(x+x^{-1})$.
There are a finite number of vectors $\mathbf{m} \in \mathbb{Z}^{2N}$ that satisfy $\abs{\mathbf{m}} \leq q$, so the unit vectors $\mathbf{\hat{m}}_{R/L}$ in Eq.~(\ref{eq:stable_ineq}) belong to a finite set $\Omega_q$.
This set cannot fill the unit sphere densely, so there exists $Q \in SO(N)$ and $\epsilon > 0$ such that $\abs*{\mathbf{\hat{m}}_R^T Q \hspace{1pt} \mathbf{\hat{m}}_L} < 1 - \epsilon$ for all $\mathbf{\hat{m}}_{R/L} \in \Omega_q$.
But $f(x) \coth{\lambda} > 1$ for any $x, \lambda > 0$, while $\csch{\lambda} \to 0$ as $\lambda \to \infty$.
Thus the right side of Eq.~(\ref{eq:stable_ineq}) is greater than $1 - \epsilon$ for sufficiently large $\lambda$. \qed

In the $N = 2$ channel case, $M$ is parameterized, according to Eq.~(\ref{eq:hyp_CS}), by two rapidities ($\lambda_1, \lambda_2$) and two angles ($\theta_1, \theta_2$, where $\theta_i$ is the rotation angle of $Q_i \in SO(2)$).
It is convenient to write these as
\beq
\lambda_{1,2} = \delta \pm \lambda , \qquad
\theta_{1,2} = \tfrac{1}{2} (\theta \mp \alpha) .
\eeq
In the limit $\delta \to 0$, the dependence on $\alpha$ disappears.
The full parameterization of $M$ is written down explicitly in the Supplemental Material~\cite{SuppMat}.

\begin{figure}[t]
    \centering
    \includegraphics[width=0.95\columnwidth]{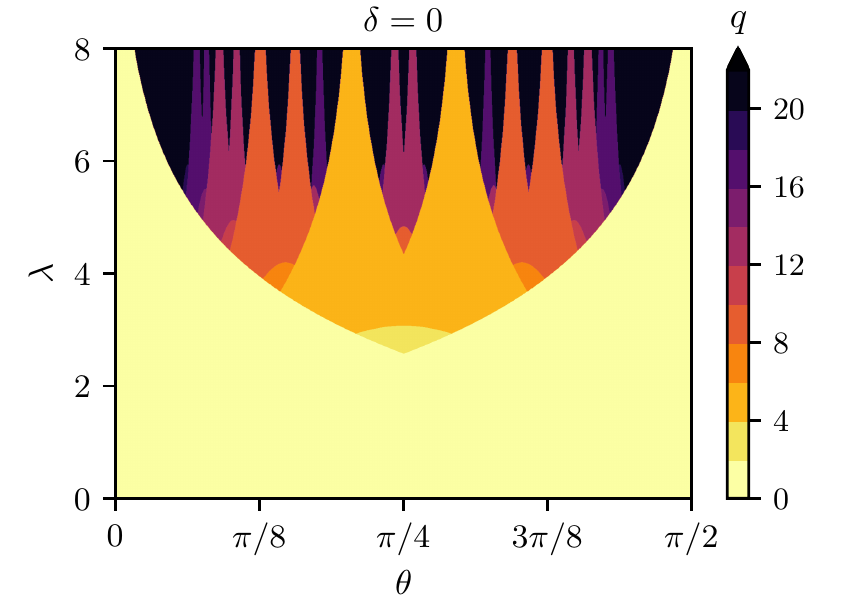}
    \caption{A slice of the absolute $q$-stability phase diagram for the $N=2$ channel LL.
    Each point on the plot is assigned the largest integer $q$ such that all perturbations with $\abs{\mathbf{m}} \leq q$ are irrelevant for those parameter values $(\lambda,\theta)$.
    The diagram is identical for $\theta \mapsto \theta + n\pi/2$.}
    \label{fig:absolute_q_stability_d0_a0}
\end{figure}

We construct an ``absolute $q$-stability phase diagram'' for the $2$-channel LL by assigning to each point $(\lambda,\delta,\theta,\alpha)$ in the resulting parameter space its absolute $q$-stability value, $q$.
Figure~\ref{fig:absolute_q_stability_d0_a0} shows the $\delta = 0$ slice of this diagram; other slices may be found in Ref.~\cite{SuppMat}.
Each point in the phase diagram corresponds to a 6-parameter family of interaction matrices $V$, which can be obtained using Eq.~(\ref{eq:V_param}).
The resulting general expression for $V$ is given in Ref.~\cite{SuppMat}.
Here, we concentrate on the particular case in which the diagonal blocks $V_{RR}$ and $V_{LL}$ are equal, and $\delta = 0$. 
In this case,
\beq
\label{eq:V2}
V = \left[ \begin{array}{cc|cc}
v_+ & w & c_+ & c_0 \\
w & v_- & c_0 & c_- \\[0.15em]
\hline
c_+ & c_0 & v_+ & w \\[-0.15em]
c_0 & c_- & w & v_-
\end{array} \right] ,
\eeq
where $v_{\pm} = v \pm u$,
\begin{subequations}
\begin{align}
c_{\pm} &= (w \sin{\theta} \pm v_{\pm} \cos{\theta}) \tanh{\lambda} , \\*
c_0 &= v \sin{\theta} \tanh{\lambda} .
\end{align}
\end{subequations}
The parameters $v$, $u$, and $w$ do not affect scaling dimensions; they can be chosen arbitrarily subject only to the constraint that $V$ must be positive definite, which requires $v > 0$ and
\beq
\label{eq:pos_def}
(u \sin{\theta} - w \cos{\theta})^2 \cosh^2\!\lambda + (u \cos{\theta} + w \sin{\theta})^2 < v^2 .
\eeq
If in addition $\theta \in [0,\pi]$ and $u \cos{\theta} + w \sin{\theta} \geq v \abs{\cos{\theta}}$, then every entry in the $V$ matrix is nonnegative.
Note that the above inequalities can be satisfied simultaneously---the first defines the interior of an ellipse in the $(u/v, w/v)$ plane, and the second selects a segment of this ellipse.
Thus, we can realize any of the absolutely $q$-stable phases in Figure~\ref{fig:absolute_q_stability_d0_a0} with purely repulsive interactions.

The 2-channel LLs defined by Eqs.~(\ref{eq:S_LL}) and (\ref{eq:V2}--\ref{eq:pos_def}) can in principle be realized in a single-spinful-channel quantum wire with either time-reversal or spatial inversion symmetry, but not both \cite{SuppMat}. 
(We emphasize that these LL phases are $q$-stable with respect to perturbations that break the symmetry as well.)
If the system also has spin-rotation symmetry about some axis, one can reformulate the effective action in terms of non-chiral charge and spin fields \cite{Giamarchi2003,SuppMat}.
In the time-reversal invariant case, the corresponding Hamiltonian takes the form
\begin{align}
\label{eq:H2}
H = \frac{1}{2\pi} \int dx \, \bigg[ 
&v_c K_c (\pi \Pi_c)^2 + \frac{v_c}{K_c} (\partial_x \varphi_c)^2 \nonumber \\*
&+ v_s K_s (\pi \Pi_s)^2 + \frac{v_s}{K_s} (\partial_x \varphi_s)^2 \nonumber \\*
&+ d_+ (\pi \Pi_c) \partial_x \varphi_s + d_- (\pi \Pi_s) \partial_x \varphi_c \bigg] ,
\end{align}
where $\varphi_{c(s)}$ is the charge (spin) phase field, with conjugate momentum density $\Pi_{c(s)}$.
The parameters $v_{c(s)}, K_{c(s)}, d_{\pm}$ are simple functions of $v_{\pm}, w, c_{\pm}, c_0$; the expressions are given in Ref.~\cite{SuppMat}.
The $q$-stable phases identified in this work require $d_+ \neq d_-$.
Standard treatments of a spin-orbit-coupled LL, such as Ref.~\cite{Moroz2000}, make the additional assumption that interactions are pointlike; this leads to Eq.~(\ref{eq:H2}) with $d_+ = d_-$.
However, $d_+ \neq d_-$ is perfectly consistent with the symmetries of the problem, and appears naturally if one allows for more general short-range interactions.

\vspace{0.5em}
\textbf{Conclusions.}
As we have seen in this paper, the 1-channel Luttinger liquid is the exception.
For any number of channels $N>1$---including even $N=2$---there are parameter regimes in which, for any desired finite $q$,
all instabilities up to $q$-th order in electron operators are kept at bay.
These phases are, in some sense, better examples of non-Fermi liquids than the 1-channel LL
since they do not have a tendency to order frustrated only by low dimension.
We cannot take $q=\infty$, so these states will eventually be unstable, but this may not occur until unobservably low temperatures. 
Moreover, it is much more likely that it will be possible to tune the parameters of an $N=2$ channel Luttinger liquid into the necessary regime in an experiment than it would be for $N=23$, which appears to be necessary for $q=\infty$. 
Thus, the work in this paper may facilitate the observation of these phases in experiments and may serve as a useful paradigm for
thinking about higher-dimensional non-Fermi liquids. 
Our results can also be translated into statements about stable phases of classical 2-dimensional or layered 3-dimensional systems; it would be interesting to explore the consequences for particular classical systems of experimental interest.

\begin{acknowledgments}

We thank Michael Freedman and Eugeniu Plamadeala for helpful discussions, and Eduardo Fradkin, Steven A. Kivelson and Michael Mulligan for useful comments on the manuscript.
This work was supported by the Microsoft Corporation.

\end{acknowledgments}


%


\onecolumngrid
\clearpage

\begin{center}

{\large \bf Supplemental Material for\\[0.2em]
``Almost Perfect Metals in One Dimension''}\\[0.8em]

Chaitanya Murthy${}^1$ and Chetan Nayak${}^{1,2}$\\[0.5em]
{\small ${}^1$\it Department of Physics, University of California, Santa Barbara, CA 93106, USA\\[-0.05em]
${}^2$Microsoft Quantum, Station Q, University of California, Santa Barbara, CA 93106, USA}

\end{center}

\vspace{0.3cm}

\twocolumngrid

\setcounter{figure}{0}
\setcounter{equation}{0}

\renewcommand{\thefigure}{S\arabic{figure}}
\renewcommand{\theequation}{S\arabic{equation}}
\renewcommand{\thesection}{S\arabic{section}}

\newcommand{\scite}[1]{[\textcolor{blue!70!black}{S}\citenum{#1}]}


\section{Restrictions on the interaction matrix $V$ imposed by 
symmetries}
\label{app:V_sym_restrictions}

Consider the effective theory of a system that is invariant under one or more symmetries that interchange right and left-movers, such as time-reversal ($\mathcal{T}$), and/or spatial inversion ($\mathcal{P}$).
On general grounds, $\mathcal{T}$ must be implemented in the effective theory by an anti-unitary operator that squares to $-1$ when acting on fermionic operators.
Spatial inversion $\mathcal{P}$ must be implemented by a unitary operator that squares to $+1$.

\subsection{$\mathcal{T}$ symmetry but no $\mathcal{P}$ symmetry}
\label{app:V_sym_T}

First consider the case in which the system has time-reversal symmetry but no inversion symmetry.
The chiral boson fields $\phi_I$ can be chosen to transform as follows under time-reversal (here the index $I = 1,2, \dots N$):
\beq
\label{eq:TR_def}
\mathcal{T} :
\begin{cases}
\phi_I(x,t) &\longrightarrow \quad \phi_{I+N}(x,-t) , \\
\phi_{I+N}(x,t) &\longrightarrow \quad \pi + \phi_I(x,-t) .
\end{cases}
\eeq
In addition, $\mathcal{T}$ complex conjugates $i \to -i$.
Then, $\mathcal{T}$ correctly interchanges right- and left-movers, and squares to $-1$ when acting on the fermion fields $\psi_I \propto e^{\mp i \phi_I} \gamma_I$.
(Alternatively, one could omit the $\pi$ in Eq.~(\ref{eq:TR_def}) and have the Klein factors $\gamma_I$ transform nontrivially.)
In this representation, $\mathcal{T}$ symmetry imposes that the interaction matrix $V$ must satisfy
\beq
\label{eq:V_sym1_cond}
\Sigma V \Sigma = V ,
\eeq
where $\Sigma \equiv \sigma_x \otimes \mathbb{I}_N$ and $\sigma_x$ is the usual Pauli matrix.
Thus, $V$ must have the block form
\beq
\label{eq:V_sym1_form}
V = \begin{bmatrix}
V_1 & V_2 \\ V_2 & V_1
\end{bmatrix} ,
\eeq
where $V_i = V_i^T$. Conversely, any $2N \times 2N$ positive definite matrix $V$ of this form can serve as the interaction matrix of a $\mathcal{T}$-symmetric $N$-channel Luttinger liquid.

\subsection{$\mathcal{P}$ symmetry but no $\mathcal{T}$ symmetry}
\label{app:V_sym_P}

Next consider the case in which the system has inversion symmetry but no time-reversal symmetry.
The chiral boson fields $\phi_I$ can be chosen to transform as follows under spatial inversion (again the index $I = 1,2, \dots N$):
\beq
\label{eq:inv_def}
\mathcal{P} :
\begin{cases}
\phi_I(x,t) &\longrightarrow \quad - \phi_{I+N}(-x,t) , \\
\phi_{I+N}(x,t) &\longrightarrow \quad - \phi_I(-x,t) .
\end{cases}
\eeq
Then, $\mathcal{P}$ correctly interchanges right- and left-movers, and squares to $+1$ when acting on the fermion fields $\psi_I$.
In this representation, $\mathcal{P}$ symmetry imposes that the interaction matrix $V$ must satisfy Eq.~(\ref{eq:V_sym1_cond}), and hence that it must have the block form (\ref{eq:V_sym1_form}).
Conversely, any $2N \times 2N$ positive definite matrix $V$ of the form (\ref{eq:V_sym1_form}) can serve as the interaction matrix of a $\mathcal{P}$-symmetric $N$-channel Luttinger liquid.

\subsection{Both $\mathcal{T}$ and $\mathcal{P}$ symmetry}
\label{app:V_sym_PT}

Finally, consider the case in which the system has both time-reversal symmetry and inversion symmetry.
The transformation laws (\ref{eq:TR_def}) and (\ref{eq:inv_def}) correspond to \emph{different} representations of the fermion fields in terms of bosons, and hence cannot be used simultaneously.
As is well known, symmetry with respect to $\mathcal{P} \mathcal{T}$ enforces a twofold degeneracy of the bands at each point in $k$-space. Therefore, the low-energy effective theory now involves $4N$ chiral spinless Dirac fermions $\psi_I$, where $I = 1,2, \dots,2N$ labels right-movers and $I = 2N+1,\dots,4N$ labels left-movers.
The corresponding chiral boson fields $\phi_I$ can be chosen to transform as follows under time-reversal and spatial inversion (here the index $I = 1,2, \dots N$):
\begin{subequations}
\label{eq:TR_inv_def}
\begin{align}
\mathcal{T} :
&\begin{cases}
\phi_I(x,t) &\longrightarrow \quad \phi_{I+2N}(x,-t) , \\
\phi_{I+N}(x,t) &\longrightarrow \quad \phi_{I+3N}(x,-t) , \\
\phi_{I+2N}(x,t) &\longrightarrow \quad \pi + \phi_I(x,-t) , \\
\phi_{I+3N}(x,t) &\longrightarrow \quad \pi + \phi_{I+N}(x,-t) ,
\end{cases} \\*
\mathcal{P} :
&\begin{cases}
\phi_I(x,t) &\longrightarrow \quad - \phi_{I+3N}(-x,t) , \\
\phi_{I+N}(x,t) &\longrightarrow \quad - \phi_{I+2N}(-x,t) , \\
\phi_{I+2N}(x,t) &\longrightarrow \quad - \phi_{I+N}(-x,t) , \\
\phi_{I+3N}(x,t) &\longrightarrow \quad - \phi_I(-x,t) .
\end{cases}
\end{align}
\end{subequations}
Now, $\mathcal{T}$ symmetry and $\mathcal{P}$ symmetry respectively impose that the interaction matrix $V$ must satisfy
\begin{subequations}
\label{eq:V_sym2_cond}
\begin{align}
(\mathcal{T}) \qquad \Sigma_1 V \Sigma_1 &= V , \\*
(\mathcal{P}) \qquad \Sigma_2 V \Sigma_2 &= V ,
\end{align}
\end{subequations}
where $\Sigma_1 \equiv \sigma_x \otimes \mathbb{I}_2 \otimes \mathbb{I}_N$ and
$\Sigma_2 \equiv \sigma_x \otimes \sigma_x \otimes \mathbb{I}_N$.
Thus, $V$ must have the block form
\beq
\label{eq:V_sym2_form}
V = \begin{bmatrix}
V_1 & V_2 & V_3 & V_4 \\ 
V_2 & V_1 & V_4 & V_3 \\ 
V_3 & V_4 & V_1 & V_2 \\ 
V_4 & V_3 & V_2 & V_1
\end{bmatrix} ,
\eeq
where $V_i = V_i^T$.
Conversely, any $4N \times 4N$ positive definite matrix $V$ of this form can serve as the interaction matrix of a $\mathcal{T}$- and $\mathcal{P}$-symmetric $2N$-channel Luttinger liquid.

\section{Properties of the map $\varphi : V \mapsto M$ from interaction matrices to scaling dimension matrices}
\label{app:V_to_M_properties}

Let $\mathscr{P}_N$ denote the set of real symmetric positive definite $N \times N$ matrices, and let $\mathscr{M}_N \equiv SO(N,N) \cap \mathscr{P}_{2N}$.
The map $\varphi$ from interaction matrices $V \in \mathscr{P}_{2N}$ to ``scaling dimension matrices'' $M \in \mathscr{M}_N$ is defined as
\beq
\begin{aligned}
\varphi &: \mathscr{P}_{2N} \to \mathscr{M}_N , \\
\varphi &: V \mapsto A^T \! A ,
\end{aligned}
\eeq
where $A \in SO(N,N)$ and $A V \! A^T = D$ is diagonal.

\subsection{General properties}

The first and second lemmas below show that $\varphi$ is well-defined.
The third, fourth and fifth lemmas characterize the inverse images $\varphi^{-1}(M)$, and yield the parameterization of $V$ matrices used in the main text.
All of these results are elementary, but we record them here for completeness.

\begin{lem}
If $V \in \mathscr{P}_{2N}$, then there exists $A \in SO(N,N)$ such that $A V \! A^T = D$ is diagonal.
\end{lem}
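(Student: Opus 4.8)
The plan is to reduce the joint requirement ``$AKA^T = K$ and $AVA^T$ diagonal'' to an ordinary orthogonal diagonalization, using the positive definiteness of $V$ to absorb it into a congruence. Since $V \in \mathscr{P}_{2N}$, it has a unique positive definite square root $V^{1/2}$, with inverse $V^{-1/2}$. First I would look for $A$ of the form $A = B\, V^{-1/2}$. Then $AVA^T = BB^T$, so requiring $AVA^T$ diagonal becomes the requirement that $BB^T$ be diagonal, while the pseudo-orthogonality condition $AKA^T = K$ becomes $B S B^T = K$, where $S \equiv V^{-1/2} K V^{-1/2}$ is a real symmetric matrix.

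The key object is thus $S$. Being symmetric, it is orthogonally diagonalizable: there exist $O \in O(2N)$ and a real diagonal matrix $\Lambda$ with $S = O^T \Lambda O$. Moreover $S$ is congruent to $K$ via the invertible matrix $V^{-1/2}$, so by Sylvester's law of inertia it has exactly $N$ negative and $N$ positive eigenvalues, all nonzero. After folding a suitable permutation into $O$, I would order the eigenvalues so that the $N$ negative ones come first; then $\operatorname{sgn}(\Lambda) = K$, and since all the diagonal factors commute, $\Lambda = \lvert \Lambda \rvert^{1/2} K \lvert \Lambda \rvert^{1/2}$, where $\lvert \Lambda \rvert^{1/2}$ is the diagonal matrix of square roots of the absolute eigenvalues.

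With this in hand I would set $B = \lvert \Lambda \rvert^{-1/2} O$, i.e. $A = \lvert \Lambda \rvert^{-1/2} O V^{-1/2}$. A direct computation gives $B S B^T = \lvert \Lambda \rvert^{-1/2} (O S O^T) \lvert \Lambda \rvert^{-1/2} = \lvert \Lambda \rvert^{-1/2}\, \Lambda\, \lvert \Lambda \rvert^{-1/2} = K$, so $AKA^T = K$; and $BB^T = \lvert \Lambda \rvert^{-1}$, so $AVA^T = \lvert \Lambda \rvert^{-1} =: D$ is diagonal (indeed positive definite). This already places $A$ in $O(N,N)$. It remains to fix the determinant for membership in $SO(N,N)$: any $A$ with $AKA^T = K$ satisfies $(\det A)^2 = 1$, so $\det A = \pm 1$; if $\det A = -1$, I would left-multiply by a diagonal matrix $E$ having a single $-1$ entry. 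Since $E$ is diagonal it commutes with $K$ and obeys $E^2 = I$, so $EA$ still satisfies $(EA)K(EA)^T = K$ and $(EA)V(EA)^T = EDE = D$ remains diagonal, while $\det(EA) = +1$.

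I expect the only genuine content to be the opening reduction: recognizing that conjugating by $V^{-1/2}$ converts the simultaneous problem into a single orthogonal diagonalization of the symmetric matrix $S$. Everything afterward is bookkeeping. The one point that demands care is pinning down the signature of $S$ and aligning the eigenvalue ordering with the sign pattern of $K$ so that $\operatorname{sgn}(\Lambda) = K$ holds exactly; Sylvester's law of inertia together with an absorbed permutation handle this cleanly.
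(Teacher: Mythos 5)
Your proposal is correct and follows essentially the same route as the paper: both reduce the problem to orthogonally diagonalizing the symmetric matrix $V^{-1/2} K V^{-1/2}$, invoke Sylvester's law of inertia to match its signature to $K$, and rescale by the diagonal matrix of (inverse square roots of) absolute eigenvalues to recover pseudo-orthogonality. The only cosmetic difference is in the determinant bookkeeping—the paper arranges $\det Q = +1$ up front and deduces $\det A = 1$ from $\det D = \det V$, whereas you patch the sign at the end with a reflection $E$—but this is the same argument.
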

\begin{proof}
(by construction).
Let $V^{1/2}$ denote the unique symmetric positive definite square root of $V$, so that
\beq
V^{1/2} = (V^{1/2})^T , \qquad 
V^{1/2} > 0, \qquad
(V^{1/2})^2 = V ,
\eeq 
and let $V^{-1/2} \equiv (V^{1/2})^{-1} = (V^{-1})^{1/2}$.
The matrix $V^{-1/2} K V^{-1/2}$ (where $K = -\mathbb{I}_N \oplus \mathbb{I}_N$) is symmetric, and can therefore be diagonalized by some $Q \in SO(2N)$.
Furthermore, Sylvester's theorem of inertia \scite{Sylvester1852} ensures that $V^{-1/2} K V^{-1/2}$ has $N$ positive and $N$ negative eigenvalues.
Thus, $Q$ can be chosen so that
\beq
\label{eq:VKV_diag}
Q V^{-1/2} K V^{-1/2} Q^T = D^{-1} K ,
\eeq
where $D$ is diagonal and positive definite (this can be arranged by re-ordering the rows of $Q$ and, if necessary, multiplying one row by $-1$ to maintain $\det Q = +1$).
Taking the determinant of both sides of Eq.~(\ref{eq:VKV_diag}), we have $\det V = \det D$.
Therefore $A \equiv D^{1/2} Q V^{-1/2}$ satisfies the desired properties: $A K A^T = K$, $\det A = 1$, and $A V \! A^T = D$.
\end{proof}

\begin{lem}
If $A_i \in SO(N,N)$  and $A_i V \! A_i^T = D_i$ is diagonal for $i = 1,2$, then $A_1^T A_1 = A_2^T A_2$.
\end{lem}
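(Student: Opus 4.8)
The plan is to reduce the claim to showing that the change-of-frame matrix between the two diagonalizations is orthogonal. Set $P \equiv A_2 A_1^{-1}$. Since $SO(N,N)$ is a group and $\det A_1 = 1$, we have $P \in SO(N,N)$, so $P K P^T = K$. Moreover, from $A_i V A_i^T = D_i$ we obtain $P D_1 P^T = A_2 V A_2^T = D_2$. Because $A_2 = P A_1$, the desired identity $A_1^T A_1 = A_2^T A_2$ is equivalent to $A_1^T (P^T P) A_1 = A_1^T A_1$, and since $A_1$ is invertible this is just $P^T P = \mathbb{I}_{2N}$. So it suffices to prove: if $P \in SO(N,N)$ satisfies $P D_1 P^T = D_2$ with $D_1, D_2$ diagonal and positive definite, then $P$ is orthogonal.

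First I would extract the block structure of $P$. The relation $P K P^T = K$ gives $P^{-T} = K P K$ (using $K^2 = \mathbb{I}_{2N}$), so $P D_1 P^T = D_2$ can be rewritten as $P D_1 = D_2 K P K$. Reading off the $(i,j)$ entry, and using that $K$ and the $D_i$ are diagonal, yields $P_{ij}\bigl[(D_1)_{jj} - K_{ii} K_{jj} (D_2)_{ii}\bigr] = 0$. Since every diagonal entry of $D_1$ and $D_2$ is strictly positive, the bracket can vanish only when $K_{ii} K_{jj} = +1$, i.e.\ when $i$ and $j$ lie in the same chirality block ($i, j \leq N$ or $i, j > N$). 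Hence $P_{ij} = 0$ whenever $i$ and $j$ lie in different blocks, so $P$ is block diagonal, $P = \mathrm{diag}(P_-, P_+)$ with $P_\pm \in GL(N,\mathbb{R})$.

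Finally I would feed this block form back into $P K P^T = K$. With $K = \mathrm{diag}(-\mathbb{I}_N, \mathbb{I}_N)$ and $P$ block diagonal, the condition reduces to $-P_- P_-^T = -\mathbb{I}_N$ and $P_+ P_+^T = \mathbb{I}_N$, i.e.\ both blocks are orthogonal. Therefore $P^T P = \mathrm{diag}(P_-^T P_-, P_+^T P_+) = \mathbb{I}_{2N}$, which is exactly what was needed.

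The only subtlety I anticipate is that $D_1$ and $D_2$ may have repeated diagonal entries, which would obstruct any argument that tries to pin down $A$ through uniqueness of the eigenvectors of $K V$. The componentwise argument above sidesteps this entirely: it never requires the entries of the $D_i$ to be distinct, only positive, so the block-diagonalization of $P$ goes through regardless of degeneracies. Everything else is routine manipulation of the defining relation $P K P^T = K$.
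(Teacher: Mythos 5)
Your proof is correct, and it shares the paper's overall skeleton: both reduce the claim to showing that the transition matrix $P = A_2 A_1^{-1} \in SO(N,N)$ (called $A$ in the paper) satisfies $P^T P = \mathbb{I}_{2N}$, both exploit the inversion identity special to $SO(N,N)$ (the paper uses $A^{-1} = KA^TK$, you use the equivalent $P^{-T} = KPK$), and both finish by noting that a block-diagonal element of $SO(N,N)$ has orthogonal blocks. Where you genuinely diverge is the core step, proving that $P$ is block diagonal. The paper turns the congruence into a similarity, $D_2 K = P (D_1 K) P^{-1}$, then argues spectrally: similar diagonal matrices differ by a permutation, the sign structure of $K$ forces that permutation to be block-preserving, and a residual factor $B = $ (permutation)$^{-1}P$ must preserve the eigenspaces of $D_1 K$ --- an argument requiring some care (the paper's ``at the very least'' hedge) when the $D_i$ have repeated entries. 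You instead convert the congruence into the linear matrix equation $P D_1 = D_2 K P K$ and read off entries, getting $P_{ij}\bigl[(D_1)_{jj} - K_{ii}K_{jj}(D_2)_{ii}\bigr] = 0$; since the bracket is a sum of two positive numbers whenever $i,j$ lie in opposite chirality blocks, the cross-block entries of $P$ vanish with no eigenvalue bookkeeping and no case analysis on degeneracies, exactly as you note. This is a cleaner and more elementary execution of the same strategy. One point worth making explicit in your write-up: the strict positivity of the diagonal entries of $D_1$ and $D_2$, which your bracket argument requires, follows from the standing hypothesis $V \in \mathscr{P}_{2N}$ because congruence preserves positive definiteness (the paper's sign-structure step silently relies on the same fact, and the lemma is false without it --- e.g.\ $V = K$ is diagonalized by every $A \in SO(N,N)$).
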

\begin{proof}
Note that every $A \in SO(N,N)$ is invertible, with $A^{-1} = K A^T K$, where $K = -\mathbb{I}_N \oplus \mathbb{I}_N$ (this follows immediately from the defining condition for the group, $A K A^T = K$, and the fact that $K^2 = \mathbb{I}_{2N}$.).
Thus, to prove the lemma it suffices to prove the equivalent statement that $D_2 = A D_1 A^T$ implies $A^T \! A = \mathbb{I}_{2N}$, where $A \equiv A_2 A_1^{-1} \in SO(N,N)$.

Using $A^T = K A^{-1} K$, the equation $D_2 = A D_1 A^T$ can be rewritten as
\beq
\label{eq:DK_similarity}
D_2 K = A (D_1 K) A^{-1} .
\eeq
Thus, the matrices $D_2 K$ and $D_1 K$ are similar.
But similar diagonal matrices can differ only by a permutation of the diagonal elements.
Taking account of the sign structure due to $K$, one must have $D_2 = P D_1 P^{-1}$, with $P = P^{(1)} \oplus P^{(2)}$, where the $P^{(i)}$ are $N \times N$ permutation matrices.
Defining $B \equiv P^{-1} \! A$, Eq.~(\ref{eq:DK_similarity}) reduces to
\beq
D_1 K = B (D_1 K) B^{-1} .
\eeq
This implies that $B$ preserves each eigenspace of $D_1 K$. 
Hence it must (at the very least) have the block form $B = B^{(1)} \oplus B^{(2)}$, where the $B^{(i)}$ are $N \times N$ matrices.
Since $A = PB$, and $P$ has a similar block structure, one must also have $A = A^{(1)} \oplus A^{(2)}$, where the $A^{(i)}$ are $N \times N$ matrices.
Then the condition $A K A^T = K$ implies $A^{(i)} \in O(N)$, so that $A^T \! A = \mathbb{I}_{2N}$.
\end{proof}

\begin{lem}\label{lem:V(M)}
$V \in \varphi^{-1}(M)$ if and only if
\beq
\label{eq:V_param}
V = M^{-1/2} 
\begin{bmatrix}
X & 0 \\ 0 & Y
\end{bmatrix}
M^{-1/2}
\eeq
for some $X,Y \in \mathscr{P}_N$, where $M^{-1/2}$ denotes the unique positive definite square root of $M^{-1}$.
\end{lem}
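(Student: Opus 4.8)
The plan is to prove both implications by relating the diagonalizing matrix $A$ to the canonical square root $M^{1/2}$ through a polar decomposition. The linchpin is an auxiliary fact I would establish first: the positive definite square root $M^{1/2}$ itself lies in $SO(N,N)$, i.e.\ $M^{1/2} K M^{1/2} = K$. To see this, note that $M \in \mathscr{M}_N$ gives $MKM = K$, which rearranges (using $K^2 = \mathbb{I}_{2N}$) to $K M K = M^{-1}$. Since conjugation by the involution $K$ commutes with taking the positive definite square root, $K M^{1/2} K = M^{-1/2}$, and this rearranges to $M^{1/2} K M^{1/2} = K$.

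For the forward direction, suppose $V \in \varphi^{-1}(M)$, so there is $A \in SO(N,N)$ with $A V A^T = D$ diagonal and $A^T A = M$. I would polar-decompose $A = U M^{1/2}$ with $U \in O(2N)$; this is consistent because the positive factor in the polar decomposition of $A$ is exactly $(A^T A)^{1/2} = M^{1/2}$. Substituting into $A K A^T = K$ and using $M^{1/2} K M^{1/2} = K$ collapses the constraint to $U K U^T = K$; together with $U^T U = \mathbb{I}_{2N}$ this says $U$ commutes with $K = -\mathbb{I}_N \oplus \mathbb{I}_N$. Because $K$ has eigenvalues $\pm 1$, each of multiplicity $N$, any matrix commuting with it is block diagonal, $U = U_1 \oplus U_2$ with $U_i \in O(N)$. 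Finally, inverting $A V A^T = D$ gives $V = A^{-1} D (A^{-1})^T = M^{-1/2} U^T D U M^{-1/2}$, and writing $D = D_1 \oplus D_2$ yields $U^T D U = (U_1^T D_1 U_1) \oplus (U_2^T D_2 U_2) =: X \oplus Y$, where $X,Y \in \mathscr{P}_N$, which is the claimed form.

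For the converse, given $V = M^{-1/2}(X \oplus Y) M^{-1/2}$ with $X, Y \in \mathscr{P}_N$, I would orthogonally diagonalize $X = O_1^T D_1 O_1$ and $Y = O_2^T D_2 O_2$ with $O_i \in SO(N)$ (flipping the sign of an eigenvector if needed to force $\det O_i = 1$) and positive diagonal $D_i$. Setting $O = O_1 \oplus O_2$ --- which lies in $SO(N,N)$ since a block diagonal orthogonal matrix commutes with $K$ and has unit determinant --- and $A = O M^{1/2}$, the group property of $SO(N,N)$ gives $A \in SO(N,N)$. A direct cancellation then shows $A V A^T = D_1 \oplus D_2$ is diagonal and $A^T A = M^{1/2} O^T O M^{1/2} = M$; since $V$ is manifestly in $\mathscr{P}_{2N}$ (being congruent to the positive definite $X \oplus Y$), Lemma 2 guarantees $\varphi(V) = A^T A = M$.

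The routine parts --- checking symmetry and positive-definiteness at each stage, and the final algebraic cancellations --- I expect to be mechanical. The one step deserving care, and the main obstacle, is the auxiliary fact $M^{1/2} \in SO(N,N)$: it is precisely what lets the polar decomposition interact cleanly with the $SO(N,N)$ constraint, forcing $U$ to be block diagonal in the forward direction and placing $A = O M^{1/2}$ back in the group in the converse. Everything downstream, in particular the emergence of the two independent blocks $X$ and $Y$, is a consequence of that single structural observation.
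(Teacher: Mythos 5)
Your proof is correct and takes essentially the same route as the paper's: a polar decomposition $A = U M^{1/2}$, the observation that the orthogonal factor must lie in $O(N) \times O(N)$ (block diagonal, since it commutes with $K$), and, for the converse, the explicit construction $A = [O_1 \oplus O_2] M^{1/2}$. The only difference is one of completeness: you prove, rather than assert, the two structural facts that $M^{1/2} \in SO(N,N)$ and that $O(2N) \cap SO(N,N)$ consists of block-diagonal matrices, both of which the paper's proof simply states.
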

\begin{proof}
$(\Longrightarrow)$:
Assume that $\varphi(V) = M$.
Every $M \in \mathscr{M}_N = SO(N,N) \cap \mathscr{P}_{2N}$ has a unique positive definite symmetric square root $M^{1/2} \in \mathscr{M}_N$.
Furthermore, any matrix $A$ that satisfies $A^T \! A = M$ can be written as $A = R M^{1/2}$, for a suitable $R \in O(2N)$.
If $A \in SO(N,N)$, then we must have $R \in O(2N) \cap SO(N,N) = O(N) \times O(N) / \mathbb{Z}_2$.
Therefore, $\varphi(V) = M$ implies that $(R M^{1/2}) V (R M^{1/2})^T = D$ for some diagonal positive definite $D$ and some $R \in O(N) \times O(N) / \mathbb{Z}_2$; equivalently, $V = M^{-1/2} R^T D R M^{-1/2}$, which is of the form indicated.

$(\Longleftarrow)$:
Assume that $V$ has the form indicated. Then there exist $R_1, R_2 \in SO(N)$ that diagonalize $X,Y$ respectively.
Let $A \equiv [R_1 \oplus R_2] M^{1/2}$. Then $A \in SO(N,N)$, $A V \! A^T = D$ is diagonal, and $A^T \! A = M$. 
Thus $\varphi(V) = M$.
\end{proof}

The scaling dimension matrix $M \in \mathscr{M}_N$ can, by the hyperbolic CS decomposition, be written as [Eq.~(7) of the main text]:
\beq
\label{eq:app_hyp_CS}
M = \begin{bmatrix}
Q_1^T & 0 \\ 0 & Q_2^T
\end{bmatrix}
\begin{bmatrix}
C & -S \, \\ -S & C
\end{bmatrix}
\begin{bmatrix}
Q_1 & 0 \\ 0 & Q_2
\end{bmatrix} ,
\eeq
where $Q_1,Q_2 \in O(N)$, $C = \text{diag}(\cosh{\lambda_i})$, and $S = \text{diag}(\sinh{\lambda_i})$, with $\lambda_i \geq 0$, $i = 1,2,\dots,N$.
Note that we can equivalently take $Q_1, Q_2 \in SO(N)$ if we allow one of the $\lambda_i$'s to be negative, as done in the main text.

\begin{lem}\label{lem:V(M)_alt}
$V \in \varphi^{-1}(M)$ if and only if
\beq
\label{eq:V_param_alt}
V = \begin{bmatrix}
X + F \, Y F^T & X F + F \, Y \\ F^T \! X + Y F^T & F^T \! X F + Y
\end{bmatrix} ,
\eeq
for some $X,Y \in \mathscr{P}_N$, where
\beq
\label{eq:F_def}
F \equiv Q_1^T \,\text{diag}(\tanh(\lambda_i/2)) \, Q_2 .
\eeq
\end{lem}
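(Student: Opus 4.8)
The plan is to deduce the claim directly from Lemma~\ref{lem:V(M)}, which already characterizes $\varphi^{-1}(M)$ as the set of matrices $V = M^{-1/2}\,\mathrm{diag}(X',Y')\,M^{-1/2}$ with $X',Y'\in\mathscr{P}_N$, by computing $M^{-1/2}$ explicitly and then performing a sequence of invertible changes of variable on the pair $(X',Y')$. Because each change of variable I use is a bijection of $\mathscr{P}_N$ onto itself, it suffices to transform the parameterization of Lemma~\ref{lem:V(M)} into the parameterization~(\ref{eq:V_param_alt}); the two parameterizations then automatically have the same image, which yields both directions of the ``if and only if'' simultaneously.

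First I would compute $M^{-1/2}$ from the hyperbolic CS decomposition~(\ref{eq:app_hyp_CS}). Writing $P \equiv \mathrm{diag}(Q_1,Q_2)\in O(2N)$, that decomposition has the form $M = P^T B P$ with $B = \left[\begin{smallmatrix} C & -S \\ -S & C\end{smallmatrix}\right]$. The block $B$ is a product of commuting rapidity-$\lambda_i$ boosts in orthogonal planes, each with eigenvalues $e^{\pm\lambda_i}>0$, so $B$ is positive definite and its unique positive square root is obtained by halving the rapidities: $B^{1/2} = \left[\begin{smallmatrix} C_{1/2} & -S_{1/2} \\ -S_{1/2} & C_{1/2}\end{smallmatrix}\right]$ with $C_{1/2}=\mathrm{diag}(\cosh(\lambda_i/2))$ and $S_{1/2}=\mathrm{diag}(\sinh(\lambda_i/2))$, which follows from the one-parameter-group property of boosts and is positive definite with eigenvalues $e^{\pm\lambda_i/2}$. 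Since $P$ is orthogonal, $M^{-1/2} = P^T B^{-1/2} P$, where $B^{-1/2}$ is obtained from $B^{1/2}$ by flipping the sign of $S_{1/2}$. The crucial observation is that $B^{-1/2}$ factors as $\bar C\,\bar W = \bar W\,\bar C$, where $\bar C \equiv C_{1/2}\oplus C_{1/2}$ and $\bar W \equiv \left[\begin{smallmatrix} I & T \\ T & I\end{smallmatrix}\right]$ with $T \equiv \mathrm{diag}(\tanh(\lambda_i/2))$; the two factors commute because $C_{1/2}$ and $T$ are diagonal, and $C_{1/2}T = S_{1/2}$ reproduces the off-diagonal blocks.

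With this in hand I would substitute $M^{-1/2}=P^T\bar W\bar C P$ into $V = M^{-1/2}\,\mathrm{diag}(X',Y')\,M^{-1/2}$ and peel off the orthogonal and $\bar C$ factors. Conjugating $\mathrm{diag}(X',Y')$ by $P$ and then by $\bar C$ replaces $(X',Y')$ by $\hat X = C_{1/2}\,Q_1 X' Q_1^T\, C_{1/2}$ and $\hat Y = C_{1/2}\,Q_2 Y' Q_2^T\, C_{1/2}$, still arbitrary elements of $\mathscr{P}_N$, leaving $V = P^T\,\bar W\,\mathrm{diag}(\hat X,\hat Y)\,\bar W\,P$. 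Expanding $\bar W\,\mathrm{diag}(\hat X,\hat Y)\,\bar W$ and conjugating the result by $P$, each block collapses after inserting $Q_i Q_i^T = I$ and recognizing $Q_1^T T Q_2 = F$; for instance the $(1,1)$ block becomes $Q_1^T\hat X Q_1 + Q_1^T T Q_2\,(Q_2^T\hat Y Q_2)\,Q_2^T T Q_1 = X + F Y F^T$ upon setting $X \equiv Q_1^T\hat X Q_1$ and $Y \equiv Q_2^T\hat Y Q_2$. The remaining blocks reduce the same way to $XF+FY$, $F^T X + Y F^T$, and $F^T X F + Y$, which is exactly~(\ref{eq:V_param_alt}) with $F$ as in~(\ref{eq:F_def}).

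I expect the only delicate points to be the square-root computation of the boost block and the bookkeeping of the orthogonal conjugations: one must verify that $B^{1/2}$ really is the positive-definite square root, so that it coincides with the branch fixed by Lemma~\ref{lem:V(M)}, and that the various $Q_1,Q_2$ factors assemble into precisely $F = Q_1^T\,\mathrm{diag}(\tanh(\lambda_i/2))\,Q_2$ rather than some transpose or variant. The algebra itself is routine block-matrix multiplication, and the bijectivity of the substitutions $X'\mapsto\hat X\mapsto X$ (and likewise for $Y$)---being compositions of conjugation by orthogonal matrices with congruence by the invertible positive matrix $C_{1/2}$---guarantees that no solutions are gained or lost, so both inclusions of the stated equivalence follow at once.
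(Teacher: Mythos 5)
Your proof is correct and takes essentially the same route as the paper's: both deduce the result from Lemma~\ref{lem:V(M)} by computing $M^{-1/2}$ explicitly from the hyperbolic CS decomposition at half-rapidities and then absorbing the $\cosh(\lambda_i/2)$ factors into a bijective reparameterization of the pair $(X,Y)$. Your factorization $B^{-1/2} = \bar{W}\bar{C}$ with $\bar{W}$ built from $\mathrm{diag}(\tanh(\lambda_i/2))$ is just a repackaging of the paper's key identity $\tilde{S}\tilde{C}_2^{-1} = \tilde{C}_1^{-1}\tilde{S} = F$, so the two arguments coincide up to bookkeeping.
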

\begin{proof}
According to Lemma~\ref{lem:V(M)}, $V \in \varphi^{-1}(M)$ iff $V = M^{-1/2} [\tilde{X} \oplus \tilde{Y}] M^{-1/2}$ for some $\tilde{X},\tilde{Y} \in \mathscr{P}_N$.
From Eq.~(\ref{eq:app_hyp_CS}), it follows that
\beq
\label{eq:M_sqrt_inv}
M^{-1/2} = \begin{bmatrix}
\tilde{C}_1 & \tilde{S} \\ \tilde{S}^T & \tilde{C}_2
\end{bmatrix} ,
\eeq 
where $\tilde{C}_{\nu} = Q_{\nu}^T \, \text{diag}(\cosh(\lambda_i/2)) \, Q_{\nu}$ ($\nu = 1,2$) and $\tilde{S} = Q_1^T \, \text{diag}(\sinh(\lambda_i/2)) \, Q_2$.
Thus,
\beq
V = \begin{bmatrix}
\tilde{C}_1 \tilde{X} \tilde{C}_1 + \tilde{S} \, \tilde{Y} \tilde{S}^T &  \tilde{C}_1 \tilde{X} \tilde{S} + \tilde{S} \, \tilde{Y} \tilde{C}_2 \\ 
\tilde{S}^T \! \tilde{X} \tilde{C}_1 + \tilde{C}_2 \tilde{Y} \tilde{S}^T & \tilde{S}^T \! \tilde{X} \tilde{S} + \tilde{C}_2 \tilde{Y} \tilde{C}_2
\end{bmatrix} .
\eeq
Now define $X \equiv \tilde{C}_1 \tilde{X} \tilde{C}_1$ and $Y \equiv \tilde{C}_2 \tilde{Y} \tilde{C}_2$.
These maps from $\tilde{X},\tilde{Y} \in \mathscr{P}_N$ to $X,Y \in \mathscr{P}_N$ are bijections, because $\tilde{C}_{\nu} \in \mathscr{P}_N$.
Noting that $\tilde{S} \tilde{C}_2^{-1} = \tilde{C}_1^{-1} \tilde{S} = F$, we obtain the claimed result, Eq.~(\ref{eq:V_param_alt}).
\end{proof}

We now write the interaction matrix in block form as
\beq
V = \begin{bmatrix}
V_{RR} & V_{RL} \\ V_{LR} & V_{LL}
\end{bmatrix} ,
\eeq
where $V_{RR} , V_{LL} \in \mathscr{P}_N$ and $V_{LR} = V_{RL}^T$.

\begin{lem}
\label{lem:V(M)_imp}
$V \in \varphi^{-1}(M)$ if and only if
\begin{subequations}
\begin{align}
V_{RR} &\in \mathscr{P}_N , \label{eq:V_pos_1}\\*
V_{RR} - V_{RL} \, V_{LL}^{-1} \, V_{RL}^T &\in \mathscr{P}_N , \label{eq:V_pos_2} \\*
V_{RL} + F \, V_{RL}^T F - V_{RR} F - F V_{LL} &= 0 , \label{eq:V_M_cond}
\end{align}
\end{subequations}
where $F$ is defined in Eq.~(\ref{eq:F_def}) above.
\end{lem}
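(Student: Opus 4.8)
The plan is to reduce the entire statement to Lemma~\ref{lem:V(M)} by passing to the congruent matrix $W \equiv M^{1/2} V M^{1/2}$. Since $M^{1/2}$ is symmetric and invertible, $W$ is symmetric and $W \succ 0 \iff V \succ 0$, while Lemma~\ref{lem:V(M)} says precisely that $V \in \varphi^{-1}(M)$ iff $W$ is block diagonal, $W = X \oplus Y$, with $X,Y \in \mathscr{P}_N$. So the lemma splits into two independent claims: (i) the off-diagonal block $W_{RL}$ vanishes if and only if the linear relation~(\ref{eq:V_M_cond}) holds, and (ii) once $W$ is block diagonal, its two diagonal blocks are positive definite if and only if conditions~(\ref{eq:V_pos_1})--(\ref{eq:V_pos_2}) hold.

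For claim (i) I would use the explicit form of $M^{1/2}$, which has the same diagonal blocks $\tilde C_1,\tilde C_2$ as $M^{-1/2}$ in Eq.~(\ref{eq:M_sqrt_inv}) but off-diagonal block $-\tilde S$ (the half-rapidity boost carries the opposite sign to $M$). Expanding $W_{RL} = \begin{bmatrix}\tilde C_1 & -\tilde S\end{bmatrix} V \begin{bmatrix}-\tilde S\\ \tilde C_2\end{bmatrix}$ gives
\[
W_{RL} = -\tilde C_1 V_{RR}\tilde S + \tilde S V_{RL}^T \tilde S + \tilde C_1 V_{RL}\tilde C_2 - \tilde S V_{LL}\tilde C_2 .
\]
Conjugating by $\tilde C_1^{-1}(\cdot)\,\tilde C_2^{-1}$ — a reversible operation since $\tilde C_\nu \succ 0$ — and invoking the identity $\tilde C_1^{-1}\tilde S = \tilde S\tilde C_2^{-1} = F$ established in the proof of Lemma~\ref{lem:V(M)_alt} collapses the four terms exactly into $V_{RL} + F V_{RL}^T F - V_{RR} F - F V_{LL}$. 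Hence $W_{RL}=0$ is equivalent to~(\ref{eq:V_M_cond}). I expect this bookkeeping to be the main obstacle: one must track the two half-rapidity factors and their signs correctly and recognize the $F$-identity that drives the cancellation.

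For claim (ii), because $W$ is obtained from $V$ by congruence with the positive definite matrix $M^{1/2}$, one has $W \succ 0 \iff V \succ 0$; and once $W$ is block diagonal, $W \succ 0$ is the same as the simultaneous positive definiteness of its two diagonal blocks, which in turn (again by congruence, since $X = \tilde C_1 W_{RR}\tilde C_1$ and $Y = \tilde C_2 W_{LL}\tilde C_2$ with $\tilde C_\nu \succ 0$) is the same as $X,Y \in \mathscr{P}_N$. It then remains to recognize that $V \succ 0$ is exactly the content of~(\ref{eq:V_pos_1})--(\ref{eq:V_pos_2}), which are the Schur-complement form of positive definiteness for the block matrix $V$. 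The point requiring care is that positivity must be imposed on $V$ as a whole — the two conditions are to be read as the standard pair ``leading block positive, Schur complement positive,'' \emph{not} as separate constraints on $X$ and $Y$, since already the $N=1$ case shows that $X\succ0$ and $Y\succ0$ are not individually equivalent to the positivity of single blocks of $V$.

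Finally, the forward implication is immediate without this machinery: substituting the parameterization of Lemma~\ref{lem:V(M)_alt} into the left-hand side of~(\ref{eq:V_M_cond}) makes every term cancel identically, and $V \succ 0$ (hence~(\ref{eq:V_pos_1})--(\ref{eq:V_pos_2})) follows from $X,Y \succ 0$. If one prefers to avoid $M^{1/2}$ altogether, an equivalent route is to solve the block equations of Lemma~\ref{lem:V(M)_alt} directly: the singular values of $F$ are $\tanh(\lambda_i/2) < 1$, so $I - F^T F$ and $I - F F^T$ are invertible, one can isolate $FY$ and $XF$ from the $RR$/$RL$/$LL$ blocks and recover $X,Y$, and~(\ref{eq:V_M_cond}) emerges as the compatibility condition forcing the two resulting expressions to agree. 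This variant is more computational but entirely elementary, and I would relegate it to a remark.
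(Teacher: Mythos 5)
Your proposal is correct and takes essentially the same route as the paper's own proof: both reduce to Lemma~\ref{lem:V(M)} via the congruence $W \equiv M^{1/2} V M^{1/2}$, use the identity $\tilde{C}_1^{-1}\tilde{S} = \tilde{S}\tilde{C}_2^{-1} = F$ so that vanishing of the off-diagonal block becomes Eq.~(\ref{eq:V_M_cond}), read Eqs.~(\ref{eq:V_pos_1})--(\ref{eq:V_pos_2}) as the Schur-complement form of $V \in \mathscr{P}_{2N}$, and note that the diagonal blocks are then automatically positive definite. The only difference is bookkeeping: the paper conjugates $M^{1/2} V M^{1/2} = \tilde{X} \oplus \tilde{Y}$ by $\tilde{C}_1^{-1} \oplus \tilde{C}_2^{-1}$ once at the outset, obtaining the congruence of $V$ by $\bigl[\begin{smallmatrix} \mathbb{I}_N & -F \\ -F^{T} & \mathbb{I}_N \end{smallmatrix}\bigr]$ and reading off its blocks, whereas you expand $W_{RL}$ first and conjugate it by $\tilde{C}_1^{-1}(\cdot)\,\tilde{C}_2^{-1}$ afterwards---the same computation in a different order.
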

\begin{proof}
Equations~(\ref{eq:V_pos_1}) and (\ref{eq:V_pos_2}) are the Schur complement condition for positive definiteness of a symmetric matrix \scite{Boyd2004}; $V \in \mathscr{P}_{2N}$ iff these equations hold.
By Lemma~\ref{lem:V(M)}, $V \in \varphi^{-1}(M)$ iff $M^{1/2} V M^{1/2} = \tilde{X} \oplus \tilde{Y}$ for some $\tilde{X},\tilde{Y} \in \mathscr{P}_N$.
In the notation of Eq.~(\ref{eq:M_sqrt_inv}), one has
\beq
M^{1/2} = \begin{bmatrix}
\tilde{C}_1 & -\tilde{S} \, \\ -\tilde{S}^T & \tilde{C}_2
\end{bmatrix} .
\eeq
Conjugating the equation $M^{1/2} V M^{1/2} = \tilde{X} \oplus \tilde{Y}$ by the positive definite matrix $\tilde{C}_1^{-1} \oplus \tilde{C}_2^{-1}$, it becomes
\beq
\label{eq:app_VF}
\begin{bmatrix}
\mathbb{I}_N & \! -F \\ -F^T & \! \mathbb{I}_N
\end{bmatrix} \!
\begin{bmatrix}
V_{RR} & V_{RL} \\ V_{RL}^T & V_{LL}
\end{bmatrix} \!
\begin{bmatrix}
\mathbb{I}_N & \! -F \\ -F^T & \! \mathbb{I}_N
\end{bmatrix}
=
\begin{bmatrix}
X & 0 \\ 0 & Y
\end{bmatrix} ,
\eeq
where $X \equiv \tilde{C}_1^{-1} \tilde{X} \tilde{C}_1^{-1}$ and $Y \equiv \tilde{C}_2^{-1} \tilde{Y} \tilde{C}_2^{-1}$.
These maps from $\tilde{X},\tilde{Y} \in \mathscr{P}_N$ to $X,Y \in \mathscr{P}_N$ are bijections.
Therefore, $V \in \varphi^{-1}(M)$ iff Eq.~(\ref{eq:app_VF}) holds for some $X,Y \in \mathscr{P}_N$.
The off-diagonal block of Eq.~(\ref{eq:app_VF}) yields Eq.~(\ref{eq:V_M_cond}).
The diagonal blocks of Eq.~(\ref{eq:app_VF}) are automatically satisfied, because the matrix on the left side is positive definite (it was constructed by conjugating $V \in \mathscr{P}_{2N}$ by other matrices in $\mathscr{P}_{2N}$).
\end{proof}

\subsection{Restrictions on the scaling dimension matrix $M$ imposed by 
symmetries}

Let $P$ be any permutation matrix that satisfies $P^2 = \mathbb{I}_{2N}$ and $PKP = -K$.
Let $\mathscr{S}_P \equiv \{ A \in \mathbb{R}^{2N \times 2N} \mid PAP = A \}$.
Then, the following results hold:

\begin{lem}
If $V \in \mathscr{P}_{2N} \cap \mathscr{S}_P$ and $M = \varphi(V)$, then $M \in \mathscr{S}_P$.
\end{lem}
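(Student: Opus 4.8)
The plan is to exploit the uniqueness statement proved in the second lemma: since any $A \in SO(N,N)$ with $A V \! A^T$ diagonal produces the same matrix $A^T\! A = \varphi(V)$, it suffices to exhibit \emph{two} such diagonalizers whose associated $A^T\! A$ are related by conjugation with $P$. Concretely, I would start from one diagonalizer $A$ and build a second one of the form $P A P$; comparing the two via uniqueness will directly yield the desired symmetry $P M P = M$.

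First I would fix $A \in SO(N,N)$ with $A V \! A^T = D$ diagonal and positive definite, so that $M = A^T\! A$, and set $A' \equiv P A P$. Using that $P$ is a symmetric permutation matrix with $P^2 = \mathbb{I}_{2N}$, one has $(A')^T = P A^T P$, and the hypothesis $P V P = V$ gives
\beq
A' V (A')^T = P A (P V P) A^T P = P (A V \! A^T) P = P D P ,
\eeq
which is again diagonal and positive definite, because conjugating a diagonal matrix by a permutation matrix merely permutes its (positive) diagonal entries. To see that $A'$ remains an element of $SO(N,N)$, I would check $\det A' = (\det P)^2 \det A = 1$ and, using $P K P = -K$ alongside $A K A^T = K$,
\beq
A' K (A')^T = P A (P K P) A^T P = -P (A K A^T) P = -P K P = K .
\eeq
Thus $A'$ is a legitimate second diagonalizer of $V$.

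By the uniqueness lemma, the two diagonalizers must yield the same scaling dimension matrix, $A^T\! A = (A')^T A'$. Evaluating the right-hand side,
\beq
(A')^T A' = P A^T (P P) A P = P (A^T\! A) P = P M P ,
\eeq
so $M = P M P$, i.e. $M \in \mathscr{S}_P$, as claimed. I expect the only delicate point to be the sign bookkeeping in the $SO(N,N)$ check: the relation $P K P = -K$ looks as though it should spoil the condition $A' K (A')^T = K$, but because $K$ sits between $A$ and $A^T$ the sign is introduced once by $P K P = -K$ and removed once by the outer $-P K P = K$, so the two cancel. Everything else is routine, and the crux of the argument is simply the choice $A' = P A P$.
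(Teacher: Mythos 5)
Your proof is correct and is essentially identical to the paper's own argument: the paper likewise defines $B \equiv PAP$, checks $B \in SO(N,N)$ and $BV\!B^T = PDP$, and concludes $M = B^T\!B = PMP$ via the well-definedness of $\varphi$. Your write-up just makes the sign cancellation in the $SO(N,N)$ check and the appeal to the uniqueness lemma more explicit.
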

\begin{proof}
Pick some $A \in SO(N,N)$ such that $A V \! A^T = D$ is diagonal, and define $B \equiv P A P$. 
It is easy to check that $B K B^T = K$ and $\det B = 1$, so $B \in SO(N,N)$. 
Also, $B V \! B^T = P D P$ is diagonal, since $P$ is a permutation matrix.
Thus, $M = B^T \! B = P A^T \! A \, P = P M P$.
\end{proof}

\begin{lem}\label{lem:V(M)_sym}
Assume $M \in \mathscr{M}_N \cap \mathscr{S}_P$.
Then $V \in \varphi^{-1}(M) \cap \mathscr{S}_P$ if and only if $V = M^{-1/2} Z M^{-1/2}$ for some $Z \in (\mathscr{P}_N \times \mathscr{P}_N) \cap \mathscr{S}_P$.
\end{lem}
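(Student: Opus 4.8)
The plan is to reduce the statement to Lemma~\ref{lem:V(M)} together with a single structural observation: that the square roots $M^{\pm 1/2}$ themselves belong to $\mathscr{S}_P$. Lemma~\ref{lem:V(M)} already identifies $\varphi^{-1}(M)$ with the image of $\mathscr{P}_N \times \mathscr{P}_N$ under the conjugation $Z \mapsto M^{-1/2} Z M^{-1/2}$, which is a bijection of $\mathscr{P}_{2N}$ onto itself with inverse $V \mapsto M^{1/2} V M^{1/2}$. Thus the content of the present lemma is simply that this bijection carries the $\mathscr{S}_P$-invariant block-diagonal matrices onto exactly the $\mathscr{S}_P$-invariant elements of $\varphi^{-1}(M)$, and the whole argument will hinge on the fact that conjugation by $M^{\pm 1/2}$ commutes with the involution $A \mapsto P A P$.

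The key step, and essentially the only nontrivial one, is to establish $M^{1/2}, M^{-1/2} \in \mathscr{S}_P$. First I would note that a permutation matrix is orthogonal, so $P^T = P^{-1}$; combined with $P^2 = \mathbb{I}_{2N}$ this yields $P = P^T$, i.e.\ $P$ is symmetric. The hypothesis $M \in \mathscr{S}_P$ reads $P M P = M$, which upon using $P^{-1} = P$ is equivalent to $P M = M P$. Since $M$ and $P$ are then commuting symmetric matrices, they are simultaneously orthogonally diagonalizable; because $M^{\pm 1/2}$ is diagonal in the same eigenbasis (carrying the positive square roots of the eigenvalues of $M^{\pm 1}$), it too commutes with $P$. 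Hence $P M^{\pm 1/2} P = M^{\pm 1/2}$, i.e.\ $M^{\pm 1/2} \in \mathscr{S}_P$.

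With this in hand, both directions are direct verifications. For the ``if'' direction, suppose $V = M^{-1/2} Z M^{-1/2}$ with $Z \in (\mathscr{P}_N \times \mathscr{P}_N) \cap \mathscr{S}_P$. Then $V \in \varphi^{-1}(M)$ by Lemma~\ref{lem:V(M)}, and inserting $P^2 = \mathbb{I}_{2N}$ between consecutive factors gives $P V P = (P M^{-1/2} P)(P Z P)(P M^{-1/2} P) = M^{-1/2} Z M^{-1/2} = V$, so $V \in \mathscr{S}_P$. For the ``only if'' direction, suppose $V \in \varphi^{-1}(M) \cap \mathscr{S}_P$. Lemma~\ref{lem:V(M)} produces the (unique) $Z = M^{1/2} V M^{1/2} \in \mathscr{P}_N \times \mathscr{P}_N$, and the same insertion trick gives $P Z P = (P M^{1/2} P)(P V P)(P M^{1/2} P) = M^{1/2} V M^{1/2} = Z$, whence $Z \in (\mathscr{P}_N \times \mathscr{P}_N) \cap \mathscr{S}_P$.

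I do not anticipate a genuine obstacle: once $M^{\pm 1/2} \in \mathscr{S}_P$ is secured, everything is bookkeeping. The one point that merits a little care is the functional-calculus step asserting that $M^{\pm 1/2}$ inherits the commutation with $P$; I would phrase it through simultaneous diagonalization of the commuting symmetric pair $(M, P)$, which makes the conclusion transparent and avoids any appeal to a more general ``functions of $M$ commute with whatever commutes with $M$'' principle (though that principle is equally valid here).
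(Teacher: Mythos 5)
Your proof is correct and follows essentially the same route as the paper's: reduce to Lemma~3 via the bijection $Z \mapsto M^{-1/2} Z M^{-1/2}$, observe that $M^{\pm 1/2}$ commutes with $P$, and conclude that the conjugation preserves the condition $PAP = A$ in both directions. The only difference is that you spell out the justification of $P M^{\pm 1/2} = M^{\pm 1/2} P$ (via simultaneous diagonalization of the commuting symmetric pair), which the paper simply asserts; this is a reasonable elaboration, not a different approach.
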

\begin{proof}
Lemma~\ref{lem:V(M)} implies that $V \in \varphi^{-1}(M)$ iff $V$ has the specified form with $Z \in \mathscr{P}_N \times \mathscr{P}_N$.
Note that $P M^{\pm 1/2} = M^{\pm 1/2} P$.
Thus $V$ and $Z$ are conjugates of one another by an invertible matrix that commutes with $P$.
It follows that $PVP = V$ iff $PZP = Z$.
\end{proof}

Taking $Z = \mathbb{I}_{2N}$ shows that $\varphi^{-1}(M) \cap \mathscr{S}_P$ is nonempty for any  $M \in \mathscr{M}_N \cap \mathscr{S}_P$.
Thus, the set of interaction matrices $V$ that satisfy the constraint $P V P = V$ maps (under $\varphi$) \emph{onto} the set of scaling dimension matrices $M$ that satisfy the constraint $P M P = M$.
The constraints on $V$ derived in Section~\ref{app:V_sym_restrictions} are precisely of the form $PVP = V$ (with $P = \Sigma$, $\Sigma_1$ or $\Sigma_2$).
Hence the allowed scaling dimension matrices $M$ for a system with time-reversal ($\mathcal{T}$) and/or spatial inversion ($\mathcal{P}$) symmetry may be characterized as follows.

\subsubsection{$\mathcal{T}$ symmetry or $\mathcal{P}$ symmetry, but not both}
\label{app:M_sym_TorP}

First consider the case in which the system has either time-reversal symmetry or inversion symmetry, but not both.
We choose the $2N$ chiral bosons $\phi_I$ to transform according to Eq.~(\ref{eq:TR_def}) in the former case, and according to Eq.~(\ref{eq:inv_def}) in the latter. 
Then, in either case, the scaling dimension matrix must satisfy
\beq
\label{eq:M_sym1_cond}
\Sigma M \Sigma = M ,
\eeq
where $\Sigma \equiv \sigma_x \otimes \mathbb{I}_N$ (with no further constraints).
Imposing this constraint on the hyperbolic CS decomposition of $M$, Eq.~(\ref{eq:app_hyp_CS}), yields the conditions
\begin{subequations}
\begin{align}
Q C Q^T = C , \\*
Q S Q = S ,
\end{align}
\end{subequations}
where $Q \equiv Q_1 Q_2^T$.
Assume, without loss of generality, that there are $m$ distinct rapidities $\lambda_{\alpha}$ with multiplicities $N_{\alpha}$ (satisfying $N_1 + \cdots + N_m = N$), ordered so that $C = \cosh L$, $S = \sinh L$, and 
$L = \lambda_1 \mathbb{I}_{N_1} \oplus \lambda_2 \mathbb{I}_{N_2} \oplus \cdots \oplus \lambda_m \mathbb{I}_{N_m}$.
Then the conditions above require that
\beq
Q = R_1 \oplus R_2 \oplus \cdots \oplus R_m ,
\eeq
where $R_{\alpha} \in O(N_{\alpha})$ and $R_{\alpha} = R_{\alpha}^T$ (i.e.~each $R_{\alpha}$ is an $N_{\alpha} \times N_{\alpha}$ reflection matrix).
In the special case in which all rapidities are equal, one has
\beq
M = \begin{bmatrix}
\mathbb{I}_N \cosh{\lambda} & -R \sinh{\lambda} \\
-R \sinh{\lambda} & \mathbb{I}_N \cosh{\lambda}
\end{bmatrix} ,
\eeq
where $R \in O(N)$ and $R = R^T$.

\subsubsection{Both $\mathcal{T}$ and $\mathcal{P}$ symmetry}
\label{app:M_sym_TorP}

In the case that the system has both time-reversal and inversion symmetry, we choose the $4N$ chiral bosons $\phi_I$ to transform according to Eq.~(\ref{eq:TR_inv_def}). 
Then, the scaling dimension matrix must satisfy
\begin{subequations}
\label{eq:M_sym2_cond}
\begin{align}
(\mathcal{T}) \qquad \Sigma_1 M \Sigma_1 &= M , \\*
(\mathcal{P}) \qquad \Sigma_2 M \Sigma_2 &= M ,
\end{align}
\end{subequations}
where $\Sigma_1 \equiv \sigma_x \otimes \mathbb{I}_2 \otimes \mathbb{I}_N$ and
$\Sigma_2 \equiv \sigma_x \otimes \sigma_x \otimes \mathbb{I}_N$ (with no further independent constraints).
We can again impose these constraints on the hyperbolic CS decomposition of $M$, Eq.~(\ref{eq:app_hyp_CS}), to obtain an explicit parameterization of all scaling dimension matrices that are consistent with both $\mathcal{T}$ and $\mathcal{P}$ symmetry. However, the general parameterization is somewhat cumbersome, so we will omit it. 
In the special case in which all rapidities are equal, one finds
\beq
M = \begin{bmatrix}
\mathbb{I}_{2N} \cosh{\lambda} & -R \sinh{\lambda} \\
-R \sinh{\lambda} & \mathbb{I}_{2N} \cosh{\lambda}
\end{bmatrix} ,
\eeq
where $R \in O(2N)$ has the block form
\beq
R = \begin{bmatrix}
R_1 & R_2 \\ R_2 & R_1
\end{bmatrix} ,
\eeq
with $R_i = R_i^T$ (i.e.~$R$ is a $2N \times 2N$ reflection matrix with this particular block form).

\subsection{Intuition for the parameterization of $V$, Eq.~(\ref{eq:V_param})}

To gain some intuition for the parametrization (\ref{eq:V_param}) of $V$, first consider the limit $M = \mathbb{I}_{2N}$.
Then the interactions encoded in $X$ simply mix the right-movers amongst themselves, leading to new modes with renormalized velocities, while $Y$ does the same with the left-movers.
All scaling dimensions (being determined by $M$ alone) remain equal to their values at the free fixed point.
Next consider a different limit, $X = Y = \mathbb{I}_N$.
Now $V = M^{-1}$ is itself in $SO(N,N)$, and its inverse gives the scaling dimensions directly.
To connect these two limits, consider the Euclidean space of all symmetric $N \times N$ matrices, $\mathbb{R}^{N(N+1)/2}$. 
The positive definite matrices occupy the interior of a convex cone $\mathscr{P}_N \subset \mathbb{R}^{N(N+1)/2}$.
The space of interaction matrices is $\mathscr{V} = \mathscr{P}_{2N}$.
According to Eq.~(\ref{eq:V_param}), $\mathscr{V}$ should be regarded as a bundle of lower-dimensional convex cones $\mathscr{P}_N \times \mathscr{P}_N$ (parameterized by $X,Y$) as fibers over the $N^2$-dimensional submanifold $\mathscr{M}_N \equiv SO(N,N) \cap \mathscr{P}_{2N}$ (parameterized by $M$).
The scaling dimensions $\Delta(\mathbf{m})$, regarded as functions from $\mathscr{V} \to \mathbb{R}$, are then constant on each fiber.
Each Luttinger liquid phase, defined in terms of its instabilities (or lack thereof), thus extends over the interior of a solid cone emanating from the vertex of $\mathscr{V}$.

\subsection{Illustration of parameterization for $N=1$ channel}

The $N=1$ channel case again provides a nice illustration of these general ideas.
The set $\mathscr{V} = \mathscr{P}_2$ consists of all $2 \times 2$ matrices
\beq
\label{eq:V1}
V = \begin{bmatrix}
\alpha + \beta & \gamma \\ \gamma & \alpha - \beta
\end{bmatrix}
\eeq
with $(\alpha,\beta,\gamma) \in \mathbb{R}^3$ and $\alpha > (\beta^2 + \gamma^2)^{1/2}$.
This is quite clearly the interior of a circular cone in $\mathbb{R}^3$.
The parameterization (\ref{eq:V_param}), with $M = e^{-\lambda \sigma_x} \in \mathscr{M}_1$, corresponds to
\begin{subequations}
\begin{align}
\alpha &= \tfrac{1}{2}(x + y) \cosh{\lambda} , \\*
\beta &= \tfrac{1}{2}(x - y) , \\*
\gamma &= \tfrac{1}{2} (x + y) \sinh{\lambda} ,
\end{align}
\end{subequations}
where $x,y > 0$.
For fixed $\lambda$, the image of the resulting map $(x,y) \mapsto (\alpha,\beta,\gamma)$ is a slice of the cone $\mathscr{P}_2$, in a plane parallel to the $\beta$-axis and at an angle $\arctan(\tanh{\lambda})$ from the $\alpha$-axis.
Each such slice is the interior of a cone in $\mathbb{R}^2$, with an opening angle that decreases with increasing $\abs{\lambda}$. 
In terms of stability with respect to clean SC and CDW perturbations, $\mathscr{V} = \mathscr{P}_2$ splits into four regions: $\lambda < -\log 2$ ($\Delta_{\text{SC}} < 2 < \Delta_{\text{CDW}}$), $-\log 2 < \lambda < 0$ ($\Delta_{\text{SC}} < \Delta_{\text{CDW}} < 2$), $0 < \lambda < \log 2$ ($\Delta_{\text{CDW}} < \Delta_{\text{SC}} < 2$), and $\lambda > \log 2$ ($\Delta_{\text{CDW}} < 2 < \Delta_{\text{SC}}$).
These regions indeed take the form of solid cones emanating from the vertex of $\mathscr{P}_2$, as illustrated in Figure~\ref{fig:1channel_pd}.

\begin{figure}[t]
    \centering
    \hspace{-1em}
    \includegraphics[width=0.682\columnwidth]{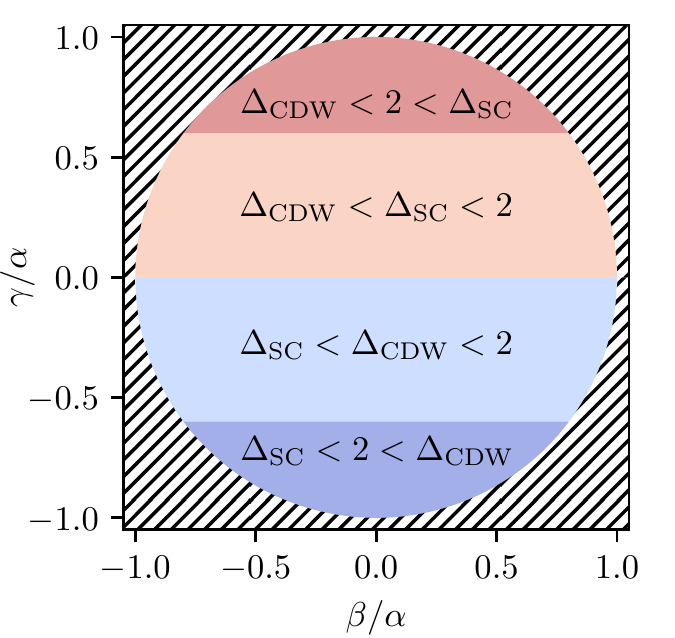}
    \caption{Phase diagram for the $N=1$ channel Luttinger liquid, in terms of stability with respect to global SC and CDW perturbations.
    The hatched region is unphysical ($V$ is not positive definite for these parameter values).}
    \label{fig:1channel_pd}
\end{figure}

\clearpage
\onecolumngrid

\section{Explicit parameterization of matrices for $N=2$ channel Luttinger liquid}

\subsection{Scaling dimension matrix $M$}
\label{sec:M2_param}

Let $Q(\phi)$ denote the $SO(2)$ rotation matrix
\beq
\label{eq:Q_def}
Q(\phi) \equiv \begin{bmatrix}
\cos{\phi} & \sin{\phi} \\ -\sin{\phi} & \cos{\phi}
\end{bmatrix} ,
\eeq
let $R(\phi)$ denote the $O(2)$ reflection matrix
\beq
\label{eq:R_def}
R(\phi) \equiv \begin{bmatrix}
\, \cos{\phi} & \sin{\phi} \\ \, \sin{\phi} & -\cos{\phi}
\end{bmatrix} ,
\eeq
and let
\beq
L \equiv \begin{bmatrix}
\delta + \lambda & 0 \\ 0 & \delta - \lambda
\end{bmatrix} .
\eeq
The parameterization of $M \in \mathscr{M}_2 \equiv SO(2,2) \cap \mathscr{P}_4$ described in Eqs.~(7) and (9) of the main text corresponds to
\beq
\label{eq:M2_param_1}
M = \begin{bmatrix}
Q^T\!(\tfrac{\theta - \alpha}{2}) & 0 \\ 0 & Q^T\!(\tfrac{\theta + \alpha}{2}) 
\end{bmatrix}
\begin{bmatrix}
\cosh{L} & -\sinh{L} \\ -\sinh{L} & \cosh{L}
\end{bmatrix}
\begin{bmatrix}
Q(\tfrac{\theta - \alpha}{2}) & 0 \\ 0 & Q(\tfrac{\theta + \alpha}{2})
\end{bmatrix} ,
\eeq
where each entry is a $2 \times 2$ matrix. Performing the matrix multiplication, we can write the result as
\beq
\label{eq:M2_param_2}
M = \begin{bmatrix}
\mathbb{I}_2 \cosh{\lambda} & -R(\theta) \sinh{\lambda} \\
-R(\theta) \sinh{\lambda} & \mathbb{I}_2 \cosh{\lambda}
\end{bmatrix} \cosh{\delta}
+ \begin{bmatrix}
R(\theta - \alpha) \sinh{\lambda} & -Q(\alpha) \cosh{\lambda} \\
-Q^T\!(\alpha) \cosh{\lambda} & R(\theta + \alpha) \sinh{\lambda}
\end{bmatrix} \sinh{\delta} ,
\eeq
where $\mathbb{I}_2$ is the $2 \times 2$ identity matrix.
The dependence on $\alpha$ disappears in the limit $\delta \to 0$, as stated in the main text.
Similarly, the dependence on $\theta$ disappears in the limit $\lambda \to 0$.

\subsubsection{$\mathcal{T}$ symmetry or $\mathcal{P}$ symmetry, but not both}

If the system has either time-reversal symmetry or inversion symmetry, but not both, the scaling dimension matrix must satisfy $\Sigma M \Sigma = M$, where $\Sigma \equiv \sigma_x \otimes \mathbb{I}_2$.
Applied to Eq.~(\ref{eq:M2_param_2}), this condition requires $\alpha = 0$.
Hence, in this case,
\begin{subequations}
\label{eq:M2_param_sym1}
\begin{align}
M &= \begin{bmatrix}
Q^T\!(\theta/2) & 0 \\ 0 & Q^T\!(\theta/2) 
\end{bmatrix}
\begin{bmatrix}
\cosh{L} & -\sinh{L} \\ -\sinh{L} & \cosh{L}
\end{bmatrix}
\begin{bmatrix}
Q(\theta/2) & 0 \\ 0 & Q(\theta/2)
\end{bmatrix} \\*[0.2em]
&= \begin{bmatrix}
\mathbb{I}_2 \cosh{\lambda} & -R(\theta) \sinh{\lambda} \\
-R(\theta) \sinh{\lambda} & \mathbb{I}_2 \cosh{\lambda}
\end{bmatrix} \cosh{\delta}
+ \begin{bmatrix}
R(\theta) \sinh{\lambda} & -\mathbb{I}_2 \cosh{\lambda} \\
-\mathbb{I}_2 \cosh{\lambda} & R(\theta) \sinh{\lambda}
\end{bmatrix} \sinh{\delta} .
\end{align}
\end{subequations}
Note in particular that the presence of $\mathcal{T}$ symmetry or $\mathcal{P}$ symmetry (but not both simultaneously) places \emph{no restrictions} on the allowed values of the parameters $\lambda$, $\delta$, and $\theta$.
Changing the sign of $\lambda$ is equivalent to shifting $\theta$ by $\pi$, so we can assume $\lambda > 0$.

\subsubsection{Both $\mathcal{T}$ and $\mathcal{P}$ symmetry}

If the system has both time-reversal and inversion symmetry, the scaling dimension matrix must satisfy $\Sigma_i M \Sigma_i = M$ for $i = 1,2$, where $\Sigma_1 \equiv \sigma_x \otimes \mathbb{I}_2$ and $\Sigma_2 \equiv \sigma_x \otimes \sigma_x$.
Applied to Eq.~(\ref{eq:M2_param_2}), these conditions require $\alpha = 0$ and $\theta = \pi/2$.
Hence, in this case,
\begin{subequations}
\label{eq:M2_param_sym2}
\begin{align}
M &= \begin{bmatrix}
Q^T\!(\pi/4) & 0 \\ 0 & Q^T\!(\pi/4) 
\end{bmatrix}
\begin{bmatrix}
\cosh{L} & -\sinh{L} \\ -\sinh{L} & \cosh{L}
\end{bmatrix}
\begin{bmatrix}
Q(\pi/4) & 0 \\ 0 & Q(\pi/4)
\end{bmatrix} \\*[0.2em]
&= \begin{bmatrix}
\mathbb{I}_2 \cosh{\lambda} & -\sigma_x \sinh{\lambda} \\
-\sigma_x \sinh{\lambda} & \mathbb{I}_2 \cosh{\lambda}
\end{bmatrix} \cosh{\delta}
+ \begin{bmatrix}
\sigma_x \sinh{\lambda} & -\mathbb{I}_2 \cosh{\lambda} \\
-\mathbb{I}_2 \cosh{\lambda} & \sigma_x \sinh{\lambda}
\end{bmatrix} \sinh{\delta} .
\end{align}
\end{subequations}

\clearpage
\subsection{Interaction matrix $V$ (general expressions)}

We parameterize the interaction matrix $V \in \mathscr{P}_4$ using Lemma~\ref{lem:V(M)_alt}.
Let
\beq
V = \begin{bmatrix}
V_{RR} & V_{RL} \\ V_{RL}^T & V_{LL}
\end{bmatrix} .
\eeq
Lemma~\ref{lem:V(M)_alt} states that $V \in \varphi^{-1}(M)$ if and only if
\begin{subequations}
\label{eq:V2_param_eqs}
\begin{align}
V_{RR} &= X + F \, Y F^T , \\*
V_{LL} &= Y + F^T \! X F , \\*
V_{RL} &= X F + F \, Y
\end{align}
\end{subequations}
for some $X,Y \in \mathscr{P}_2$.
The $2 \times 2$ matrix $F$ corresponding to the $M$ given in Eqs.~(\ref{eq:M2_param_1}) or (\ref{eq:M2_param_2}) is
\beq
\label{eq:F2}
F = Q^T\!(\tfrac{\theta - \alpha}{2}) 
\begin{bmatrix}
\tanh(\tfrac{\delta+\lambda}{2}) & 0 \\ 0 & \tanh(\tfrac{\delta-\lambda}{2}) 
\end{bmatrix} 
Q(\tfrac{\theta + \alpha}{2}) 
= \frac{\sinh{\lambda}}{\cosh{\lambda} + \cosh{\delta}} \, R(\theta) + \frac{\sinh{\delta}}{\cosh{\lambda} + \cosh{\delta}} \, Q(\alpha) ,
\eeq
where $Q(\phi)$ and $R(\phi)$ are defined in Eqs.~(\ref{eq:Q_def}) and (\ref{eq:R_def}).

The matrices $X,Y \in \mathscr{P}_2$ can be conveniently parameterized as
\begin{subequations}
\label{eq:XY2}
\begin{align}
X &= \begin{bmatrix}
x_0 + x_1 & x_2 \\ x_2 & x_0 - x_1
\end{bmatrix} , \\*
Y &= \begin{bmatrix}
y_0 + y_1 & \ y_2 \\ y_2 & \ y_0 - y_1
\end{bmatrix} ,
\end{align}
\end{subequations}
where $(x_0,x_1,x_2) \in \mathbb{R}^3$,  $x_0 > (x_1^2 + x_2^2)^{1/2}$, and $(y_0,y_1,y_2) \in \mathbb{R}^3$, $y_0 > (y_1^2 + y_2^2)^{1/2}$. 
Using Eqs.~(\ref{eq:F2}) and (\ref{eq:XY2}) in Eq.~(\ref{eq:V2_param_eqs}), we obtain
\small
\begin{subequations}
\label{eq:V2_param_full}
\begin{align}
V_{RR} &= 
\begin{bmatrix}
x_0 + x_1 & x_2 \\ x_2 & x_0 - x_1
\end{bmatrix}
+ \frac{\sinh^2\!\lambda}{(\cosh{\lambda} + \cosh{\delta})^2}
\begin{bmatrix}
y_0 + y_1 \cos{2\theta} + y_2 \sin{2\theta} & y_1 \sin{2\theta} - y_2 \cos{2\theta} \\ 
y_1 \sin{2\theta} - y_2 \cos{2\theta} & y_0 - y_1 \cos{2\theta} - y_2 \sin{2\theta}
\end{bmatrix} \nonumber \\*[0.5em]
&\quad + 
\frac{2\sinh{\lambda} \sinh{\delta}}{(\cosh{\lambda} + \cosh{\delta})^2}
\begin{bmatrix}
y_1 \cos(\theta+\alpha) + y_2 \sin(\theta+\alpha) + y_0 \cos(\theta-\alpha) & y_0 \sin(\theta-\alpha) \\ 
y_0 \sin(\theta-\alpha) & y_1 \cos(\theta+\alpha) + y_2 \sin(\theta+\alpha) - y_0 \cos(\theta-\alpha)
\end{bmatrix} \nonumber \\*[0.5em]
&\quad + 
\frac{\sinh^2\!\delta}{(\cosh{\lambda} + \cosh{\delta})^2}
\begin{bmatrix}
y_0 + y_1 \cos{2\alpha} + y_2 \sin{2\alpha} & y_2 \cos{2\alpha} - y_1 \sin{2\alpha} \\ 
y_2 \cos{2\alpha} - y_1 \sin{2\alpha} & y_0 - y_1 \cos{2\alpha} - y_2 \sin{2\alpha}
\end{bmatrix} , 
\label{eq:VRR_param_full}
\\[1em]
V_{LL} &= 
\begin{bmatrix}
y_0 + y_1 & y_2 \\ y_2 & y_0 - y_1
\end{bmatrix}
+ \frac{\sinh^2\!\lambda}{(\cosh{\lambda} + \cosh{\delta})^2}
\begin{bmatrix}
x_0 + x_1 \cos{2\theta} + x_2 \sin{2\theta} & x_1 \sin{2\theta} - x_2 \cos{2\theta} \\ 
x_1 \sin{2\theta} - x_2 \cos{2\theta} & x_0 - x_1 \cos{2\theta} - x_2 \sin{2\theta}
\end{bmatrix} \nonumber \\*[0.5em]
&\quad + 
\frac{2\sinh{\lambda} \sinh{\delta}}{(\cosh{\lambda} + \cosh{\delta})^2}
\begin{bmatrix}
x_1 \cos(\theta-\alpha) + x_2 \sin(\theta-\alpha) + x_0 \cos(\theta+\alpha) & x_0 \sin(\theta+\alpha) \\ 
x_0 \sin(\theta+\alpha) & x_1 \cos(\theta-\alpha) + x_2 \sin(\theta-\alpha) - x_0 \cos(\theta+\alpha)
\end{bmatrix} \nonumber \\*[0.5em]
&\quad + 
\frac{\sinh^2\!\delta}{(\cosh{\lambda} + \cosh{\delta})^2}
\begin{bmatrix}
x_0 + x_1 \cos{2\alpha} - x_2 \sin{2\alpha} & x_1 \sin{2\alpha} + x_2 \cos{2\alpha} \\ 
x_1 \sin{2\alpha} + x_2 \cos{2\alpha} & x_0 - x_1 \cos{2\alpha} + x_2 \sin{2\alpha}
\end{bmatrix} , 
\label{eq:VLL_param_full}
\\[1em]
V_{RL} &= 
\frac{\sinh{\lambda}}{\cosh{\lambda} + \cosh{\delta}}
\begin{bmatrix}
(x_1 + y_1 + x_0 + y_0) \cos{\theta} + (x_2 + y_2) \sin{\theta} & 
(x_0 + y_0 + x_1 - y_1) \sin{\theta} - (x_2 - y_2) \cos{\theta} \\ 
(x_0 + y_0 - x_1 + y_1) \sin{\theta} + (x_2 - y_2) \cos{\theta} & 
(x_1 + y_1 - x_0 - y_0) \cos{\theta} + (x_2 + y_2) \sin{\theta}
\end{bmatrix} \nonumber \\*[0.5em]
&\quad + 
\frac{\sinh{\delta}}{\cosh{\lambda} + \cosh{\delta}}
\begin{bmatrix}
(x_0 + y_0 + x_1 + y_1) \cos{\alpha} - (x_2 - y_2) \sin{\alpha} & 
(x_2 + y_2) \cos{\alpha} + (x_1 - y_1 + x_0 + y_0) \sin{\alpha} \\ 
(x_2 + y_2) \cos{\alpha} + (x_1 - y_1 - x_0 - y_0) \sin{\alpha} & 
(x_0 + y_0 - x_1 - y_1) \cos{\alpha} + (x_2 - y_2) \sin{\alpha}
\end{bmatrix} .
\label{eq:VRL_param_full}
\end{align}
\end{subequations}
\normalsize

Equation~(\ref{eq:V2_param_full}) gives a complete and explicit parameterization of the possible interaction matrices $V \in \mathscr{P}_4$ of a $2$-channel Luttinger liquid, in terms of the ten real parameters $(\lambda,\delta,\theta,\alpha,x_0,x_1,x_2,y_0,y_1,y_2)$. 
Of these, only the first four $(\lambda,\delta,\theta,\alpha)$ affect scaling dimensions; they determine the scaling dimension matrix $M$ via Eq.~(\ref{eq:M2_param_2}). 
The remaining six parameters can be chosen arbitrarily, subject only to the constraints $x_0 > (x_1^2 + x_2^2)^{1/2}$ and $y_0 > (y_1^2 + y_2^2)^{1/2}$ (if either of these inequalities is violated, the resulting $V$ will fail to be positive definite).

\subsubsection{$\mathcal{T}$ symmetry or $\mathcal{P}$ symmetry, but not both}

When symmetries are present, it is convenient to parameterize the interaction matrix $V \in \mathscr{P}_4$ using Lemma~\ref{lem:V(M)_sym} instead.
In the present case, Lemma~\ref{lem:V(M)_sym} gives $V = M^{-1/2} [X \oplus Y] M^{-1/2}$ with $X,Y \in \mathscr{P}_2$ satisfying $\Sigma [X \oplus Y] \Sigma = X \oplus Y$, where $\Sigma \equiv \sigma_x \otimes \mathbb{I}_2$;
the condition fixes $Y = X$.
Thus, $V = M^{-1/2} [X \oplus X] M^{-1/2}$ with $X \in \mathscr{P}_2$.
The scaling dimension matrix $M$ is given by Eq.~(\ref{eq:M2_param_sym1}).
Thus,
\beq
M^{-1/2} = \begin{bmatrix}
Q^T\!(\theta/2) & 0 \\ 0 & Q^T\!(\theta/2) 
\end{bmatrix}
\begin{bmatrix}
\cosh(L/2) & \sinh(L/2) \\ \sinh(L/2) & \cosh(L/2)
\end{bmatrix}
\begin{bmatrix}
Q(\theta/2) & 0 \\ 0 & Q(\theta/2)
\end{bmatrix} .
\eeq
The inner factors of $Q(\theta/2)$ in the product $V = M^{-1/2} [X \oplus X] M^{-1/2}$ may be absorbed into $X$, since the latter is an arbitrary element of $\mathscr{P}_2$.
Doing so, we obtain
\beq
\label{eq:V2_sym1_1}
V = \begin{bmatrix}
Q^T\!(\theta/2) & 0 \\ 0 & Q^T\!(\theta/2) 
\end{bmatrix}
\begin{bmatrix}
\cosh(L/2) & \sinh(L/2) \\ \sinh(L/2) & \cosh(L/2)
\end{bmatrix}
\begin{bmatrix}
X & 0 \\ 0 & X
\end{bmatrix}
\begin{bmatrix}
\cosh(L/2) & \sinh(L/2) \\ \sinh(L/2) & \cosh(L/2)
\end{bmatrix}
\begin{bmatrix}
Q(\theta/2) & 0 \\ 0 & Q(\theta/2)
\end{bmatrix} .
\eeq
We can parameterize $X \in \mathscr{P}_2$ as
\beq
X = \zeta \begin{bmatrix}
1 + a & -b \\ -b & 1 - a
\end{bmatrix} , 
\eeq
where $\zeta > 0$ and $(a,b) \in \mathbb{R}^2$,  $a^2 + b^2 < 1$.
Performing the matrix multiplications in Eq.~(\ref{eq:V2_sym1_1}), we obtain
\beq
\label{eq:V2_param_sym1_1}
V = \zeta \begin{bmatrix}
V_1 & V_2 \\ V_2 & V_1
\end{bmatrix} \cosh{\delta}
+ \zeta \begin{bmatrix}
V_2 & V_1 \\ V_1 & V_2
\end{bmatrix} \sinh{\delta} ,
\eeq
where 
\begin{subequations}
\label{eq:V2_param_sym1_2}
\begin{align}
V_1 &= \begin{bmatrix}
\cosh{\lambda} + a \cos{\theta} \cosh{\lambda} + b \sin{\theta} & 
a \sin{\theta} \cosh{\lambda} - b \cos{\theta} \\ 
a \sin{\theta} \cosh{\lambda} - b \cos{\theta} & 
\cosh{\lambda} - a \cos{\theta} \cosh{\lambda} - b \sin{\theta}
\end{bmatrix} , \\*
V_2 &= \begin{bmatrix}
a + \cos{\theta} & 
\sin{\theta} \\
\sin{\theta} &
a - \cos{\theta}
\end{bmatrix} \sinh{\lambda} .
\end{align}
\end{subequations}

Equations~(\ref{eq:V2_param_sym1_1}) and (\ref{eq:V2_param_sym1_2}) gives a complete and explicit parameterization of the possible interaction matrices $V \in \mathscr{P}_4$ of a $2$-channel Luttinger liquid with either time-reversal symmetry or inversion symmetry (but not both), in terms of the six real parameters $(\lambda,\delta,\theta,a,b,\zeta)$. 
Of these, only the first three $(\lambda,\delta,\theta)$ affect scaling dimensions; they determine the scaling dimension matrix $M$ via Eq.~(\ref{eq:M2_param_sym1}). 
The remaining three parameters can be chosen arbitrarily, subject only to the constraints $a^2 + b^2 < 1$ and $\zeta > 0$ (if these inequalities are violated, the resulting $V$ will fail to be positive definite).
Note that $\zeta$ is an irrelevant overall scale factor.

\subsubsection{Both $\mathcal{T}$ and $\mathcal{P}$ symmetry}

As above, we parameterize the interaction matrix $V \in \mathscr{P}_4$ using Lemma~\ref{lem:V(M)_sym}.
In this case, Lemma~\ref{lem:V(M)_sym} gives $V = M^{-1/2} [X \oplus Y] M^{-1/2}$ with $X,Y \in \mathscr{P}_2$ satisfying $\Sigma_i [X \oplus Y] \Sigma_i = X \oplus Y$ for $i =1,2$, where $\Sigma_1 \equiv \sigma_x \otimes \mathbb{I}_2$ and $\Sigma_2 \equiv \sigma_x \otimes \sigma_x$.
The conditions fix $Y = X = x_0 \mathbb{I}_2 + x_1 \sigma_x$, with $(x_0,x_1) \in \mathbb{R}^2$, $x_0 > \abs{x_1}$.
Thus, $V = M^{-1/2} [X \oplus X] M^{-1/2}$ with $X$ of the form specified.
The scaling dimension matrix $M$ is given by Eq.~(\ref{eq:M2_param_sym2}).
Thus,
\beq
M^{-1/2} = \begin{bmatrix}
Q^T\!(\pi/4) & 0 \\ 0 & Q^T\!(\pi/4) 
\end{bmatrix}
\begin{bmatrix}
\cosh(L/2) & \sinh(L/2) \\ \sinh(L/2) & \cosh(L/2)
\end{bmatrix}
\begin{bmatrix}
Q(\pi/4) & 0 \\ 0 & Q(\pi/4)
\end{bmatrix} .
\eeq
Writing $X = \zeta( \mathbb{I}_2 + a \sigma_x)$ with $\zeta > 0$, $\abs{a} < 1$, and performing the matrix multiplications in $V = M^{-1/2} [X \oplus X] M^{-1/2}$, we obtain
\beq
\label{eq:V2_param_sym2_1}
V = \zeta \begin{bmatrix}
V_1 & V_2 \\ V_2 & V_1
\end{bmatrix} \cosh{\delta}
+ \zeta \begin{bmatrix}
V_2 & V_1 \\ V_1 & V_2
\end{bmatrix} \sinh{\delta} ,
\eeq
where 
\begin{subequations}
\label{eq:V2_param_sym2_2}
\begin{align}
V_1 &= \begin{bmatrix}
1 & a \\ a & 1
\end{bmatrix} \cosh{\lambda} , \\*
V_2 &= \begin{bmatrix}
a & 1 \\ 1 & a
\end{bmatrix} \sinh{\lambda} .
\end{align}
\end{subequations}

Equations~(\ref{eq:V2_param_sym2_1}) and (\ref{eq:V2_param_sym2_2}) gives a complete and explicit parameterization of the possible interaction matrices $V \in \mathscr{P}_4$ of a $2$-channel Luttinger liquid with both time-reversal and inversion symmetry, in terms of the four real parameters $(\lambda,\delta,a,\zeta)$. 
Of these, only the first two $(\lambda,\delta)$ affect scaling dimensions; they determine the scaling dimension matrix $M$ via Eq.~(\ref{eq:M2_param_sym2}). 
The remaining two parameters can be chosen arbitrarily, subject only to the constraints $\abs{a} < 1$ and $\zeta > 0$ (if these inequalities are violated, the resulting $V$ will fail to be positive definite).
Note that, as before, $\zeta$ is an irrelevant overall scale factor.

\subsection{Interaction matrix $V$ with $\mathcal{T}$ or $\mathcal{P}$ symmetry (but not both), in the special case $\delta = 0$}

In the limit $\delta \to 0$, Eqs.~(\ref{eq:V2_param_sym1_1}) and (\ref{eq:V2_param_sym1_2}) together yield
\beq
\label{eq:V2_simple}
V = \left[ \begin{array}{cc|cc}
v_+ & \! w & c_+ & c_0 \\
w & v_- & c_0 & c_- \\[0.15em]
\hline
c_+ & c_0 & \, v_+ & \! w \\[-0.15em]
c_0 & c_- & w & v_-
\end{array} \right] ,
\eeq
where
\begin{subequations}
\label{eq:V2_simple_params}
\begin{align}
v_{\pm} &= v \pm u , \\*
v &= \zeta \cosh{\lambda} , \\*
u &= \zeta (a \cos{\theta} \cosh{\lambda} + b \sin{\theta}) , \\*
w &= \zeta (a \sin{\theta} \cosh{\lambda} - b \cos{\theta}) , \\* 
c_{\pm} &= \zeta (a \pm \cos{\theta}) \sinh{\lambda} , \\*
c_0 &= \zeta \sin{\theta} \sinh{\lambda} .
\end{align}
\end{subequations}
We assume $\lambda > 0$ (without loss of generality), and identify the values of $a,b$ for which all matrix elements of $V$ are nonnegative.
The diagonal elements of a positive definite matrix are necessarily positive, so $v_{\pm} > 0$ is automatic. 
$c_0 \geq 0$ requires $\theta \in [0,\pi]$.
Nonnegativity of the remaining matrix elements, $c_{\pm}$ and $w$, requires
\begin{subequations}
\begin{align}
a &\geq \abs{\cos{\theta}} , \label{eq:ab_ineq_1} \\*
a \sin{\theta} \cosh{\lambda} &\geq b \cos{\theta} . \label{eq:ab_ineq_2}
\end{align}
\end{subequations}
In fact, Eq.~(\ref{eq:ab_ineq_2}) is superfluous, because it follows from Eq.~(\ref{eq:ab_ineq_1}) and the positive definiteness condition $a^2 + b^2 < 1$;
assuming the latter, we have 
$\abs{b} < (1 - a^2)^{1/2} \leq \abs{\sin{\theta}} \leq \abs{\cos{\theta} \tan{\theta}} \cosh{\lambda} \leq a \abs{\tan{\theta}} \cosh{\lambda}$, which implies Eq.~(\ref{eq:ab_ineq_2}) if $\theta \in [0,\pi]$.
We conclude that the $V$ matrix given above is positive definite with all entries nonnegative if $\lambda > 0$, $\theta \in [0,\pi]$, $a^2 + b^2 < 1$, and $a \geq \abs{\cos{\theta}}$.

These results are completely equivalent to the ones stated in the main text.
Indeed, using Eq.~(\ref{eq:V2_simple_params}), one can verify that the following linear relations hold:
\begin{subequations}
\begin{align}
c_{\pm} &= (w \sin{\theta} \pm v_{\pm} \cos{\theta}) \tanh{\lambda} , \\*
c_0 &= v \sin{\theta} \tanh{\lambda} .
\end{align}
\end{subequations}
These are precisely the relations that one obtains by using Eq.~(\ref{eq:V2_simple}) and $F = Q(\theta) \tanh(\lambda/2)$ in Eq.~(\ref{eq:V_M_cond}) of Lemma~\ref{lem:V(M)_imp}.
Therefore, we can regard $(v,u,w)$, instead of $(\zeta,a,b)$, as the independent variables parameterizing $V$.
From Eq.~(\ref{eq:V2_simple_params}), we have
\beq
\begin{bmatrix}
u/v \\ w/v
\end{bmatrix}
= \begin{bmatrix}
\cos{\theta} & \sin{\theta} \sech{\lambda} \\ 
\sin{\theta} & -\cos{\theta} \sech{\lambda}
\end{bmatrix}
\begin{bmatrix}
a \\ b
\end{bmatrix} .
\eeq
Inverting this linear system,
\beq
\begin{bmatrix}
a \\ b
\end{bmatrix}
= \begin{bmatrix}
\cos{\theta} & \sin{\theta} \\ 
\, \sin{\theta} \cosh{\lambda} & -\cos{\theta} \cosh{\lambda}
\end{bmatrix} 
\begin{bmatrix}
u/v \\ w/v
\end{bmatrix} .
\eeq
Thus, the conditions $a^2 + b^2 < 1$ and $a \geq \abs{\cos{\theta}}$ translate to:
\begin{subequations}
\begin{align}
(u \sin{\theta} - w \cos{\theta})^2 \cosh^2\!\lambda + (u \cos{\theta} + w \sin{\theta})^2 &< v^2 , \\*
u \cos{\theta} + w \sin{\theta} &\geq v \abs{\cos{\theta}} .
\end{align}
\end{subequations}
These conditions are stated in the main text as Eq.~(12) and the in-line equation just below it.

\clearpage

\section{Absolute $q$-stability phase diagram for $N=2$ channel Luttinger liquid}

The interaction matrix $V \in \mathscr{P}_4$ of a $2$-channel Luttinger liquid depends on ten real parameters; these can be chosen according to Eq.~(\ref{eq:V2_param_full}).
Of the ten, only the four parameters $\xi \equiv (\lambda,\delta,\theta,\alpha)$ affect scaling dimensions; they determine the scaling dimension matrix $M(\xi)$ via Eq.~(\ref{eq:M2_param_2}), which in turn determines $\Delta(\mathbf{m};\xi) \equiv \tfrac{1}{2} \mathbf{m}^T \! M(\xi) \mathbf{m}$.

Let $q(\xi)$ denote the absolute $q$-stability value of a Luttinger liquid with parameters $\xi$.
By definition, $q(\xi)$ is the largest integer such that $\Delta(\mathbf{m};\xi) > 2$ for all nonzero $\mathbf{m} \in \mathbb{Z}^4$ with $K(\mathbf{m}) \in \mathbb{Z}$ and $\abs{\mathbf{m}} \leq q(\xi)$, where $K(\mathbf{m}) \equiv \tfrac{1}{2} \mathbf{m}^T \! K \mathbf{m}$, $K \equiv \mathbb{I}_2 \oplus - \mathbb{I}_2$.
Because of the inequality $\Delta(\mathbf{m}) \geq \abs{K(\mathbf{m})}$, one can restrict attention to those $\mathbf{m}$ for which $K(\mathbf{m}) = 0,\pm1,\pm2$.
Thus, an equivalent definition of $q(\xi)$ to the one given above is: $q(\xi)$ is the smallest positive integer such that $\Delta(\mathbf{m};\xi) \leq 2$ for some $\mathbf{m} \in \mathbb{Z}^4$ with $K(\mathbf{m}) \in \{0, \pm 1, \pm 2\}$ and $\abs{\mathbf{m}} = q(\xi)+1$.

We use the second definition to determine the phase diagram numerically.
The algorithm is straightforward:
at each point $\xi$, compute $\Delta(\mathbf{m};\xi)$ for all integer vectors $\mathbf{m}$ with $K(\mathbf{m}) \in \{0, \pm 1, \pm 2\}$ in shells of increasing $\abs{\mathbf{m}}$, until either a vector $\mathbf{m}_*$ is found for which $\Delta(\mathbf{m}_*;\xi) \leq 2$, or $\abs{\mathbf{m}}$ passes a specified cutoff value $q_*$.
Set $q(\xi) = \abs{\mathbf{m}_*}-1$ in the former case, and $q(\xi) = q_*$ in the latter.
The shells of vectors with $\abs{\mathbf{m}} =2,\dots, q_*$ can be tabulated in advance, and the matrix $M(\xi)$ only needs to be computed once at each point $\xi$.
Figure 2 of the main text and Figure~\ref{fig:absolute_q_stability} below were obtained by this method, with cutoff $q_* = 22$.
\\

\begin{figure*}[h]
\begin{subfigure}
    \centering
    \includegraphics{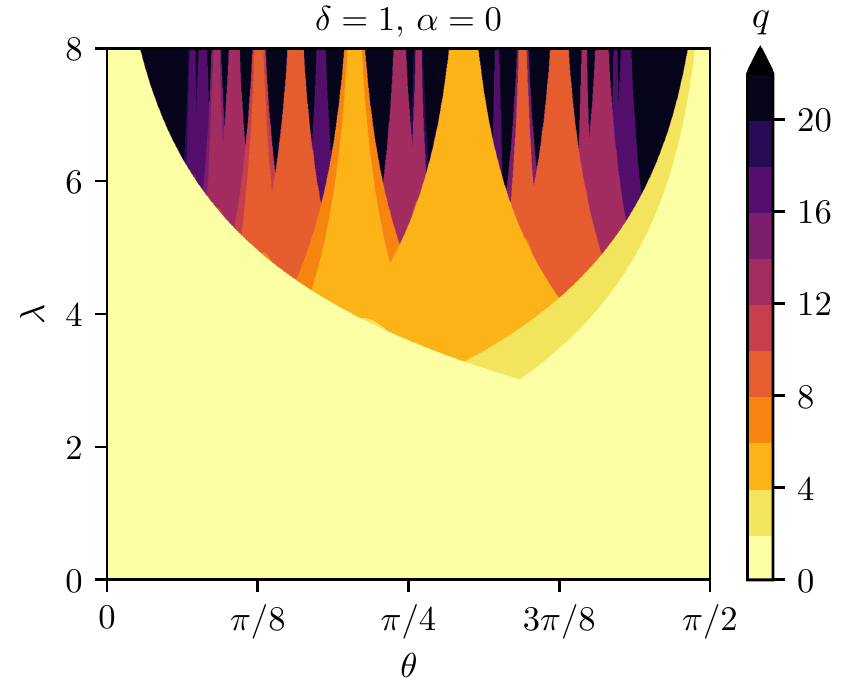}
\end{subfigure}
\
\begin{subfigure}
    \centering
    \includegraphics{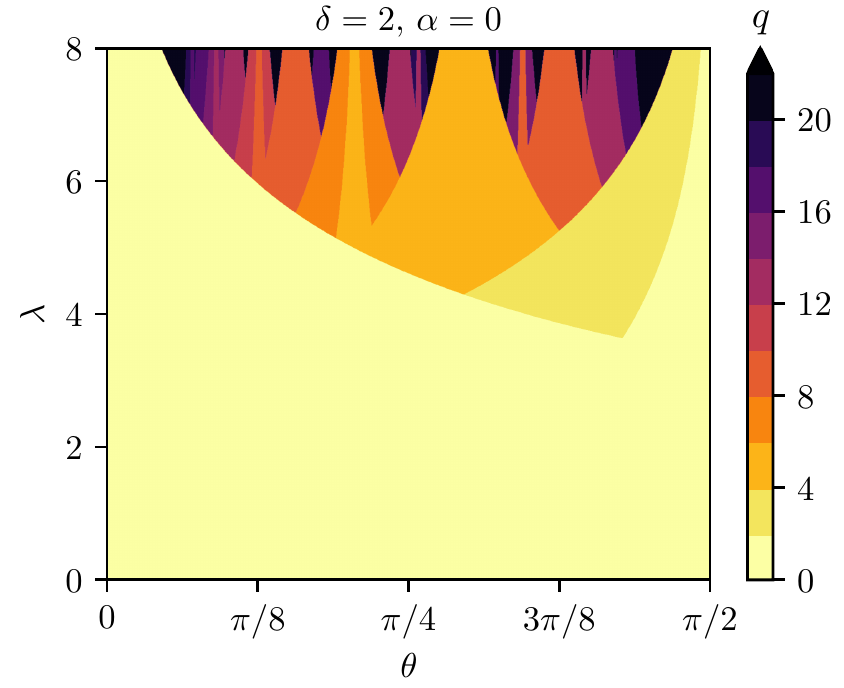}
\end{subfigure}\\
\vspace{0.5em}
\begin{subfigure}
    \centering
    \includegraphics{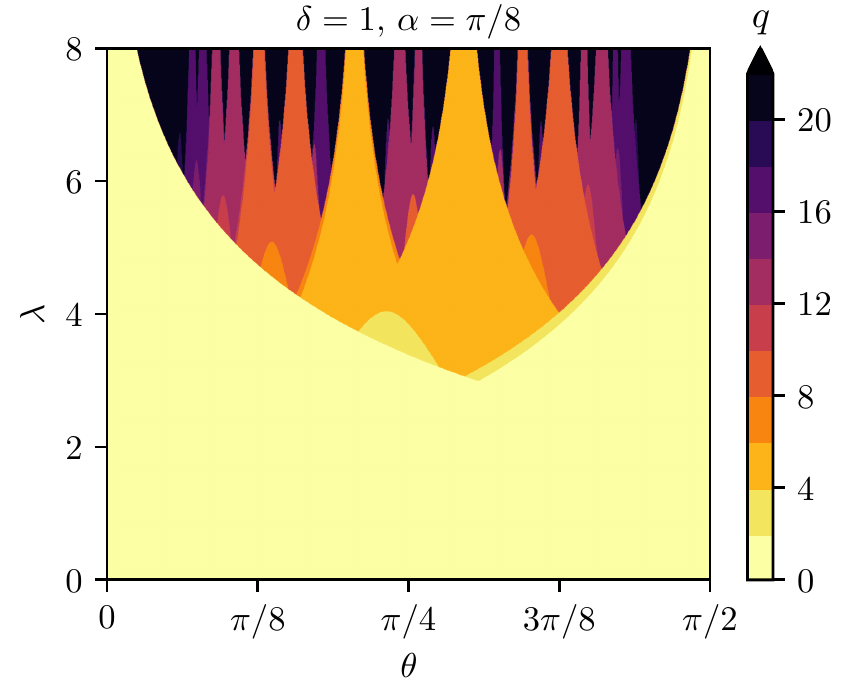}
\end{subfigure}
\
\begin{subfigure}
    \centering
    \includegraphics{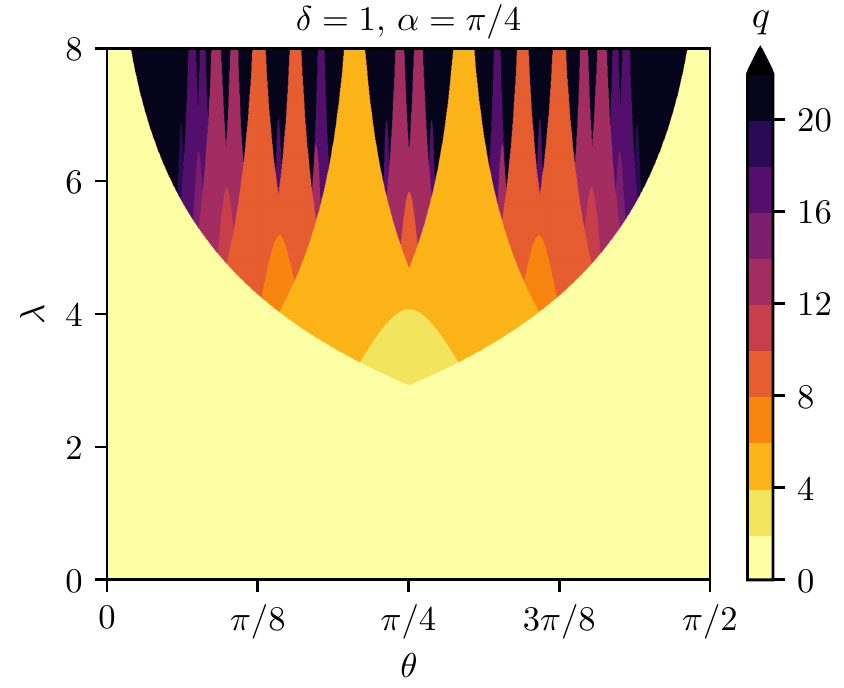}
\end{subfigure}
\caption{Various slices of the absolute $q$-stability phase diagram for the $N=2$ channel Luttinger liquid (compare Figure 2 of the main text).}
\label{fig:absolute_q_stability}
\end{figure*}

\clearpage
\twocolumngrid

\section{Representation of $2$-channel Luttinger liquid in terms of charge and spin fields}

A single-spinful-channel quantum wire provides the simplest example of a $2$-channel Luttinger liquid.
Standard treatments of this problem \cite{Giamarchi2003} are usually phrased in terms of non-chiral charge ($c$) and spin ($s$) fields, $\varphi_c$ and $\varphi_s$, and their canonical conjugates, $\Pi_c \equiv \frac{1}{\pi} \partial_x \vartheta_c$ and $\Pi_s \equiv \frac{1}{\pi} \partial_x \vartheta_s$ respectively.
These fields are related to the slowly varying parts of the charge density, $\rho_c$, and spin density, $\rho_s$, via $\rho_c = - \frac{1}{\pi} \partial_x \varphi_c$ and $\rho_s = - \frac{1}{\pi} \partial_x \varphi_s$ (we follow the normalization and sign conventions of Ref.~\cite{Giamarchi2003}).
Our analysis, meanwhile, is phrased in terms of chiral boson fields $\phi_I$, which are related to the densities at each Fermi point via $\rho_I = \frac{1}{2\pi} \partial_x \phi_I$.

A rigorous reformulation of the bosonic effective theory in terms of charge and spin fields is possible when the system is invariant under spin rotations about some axis $\mathbf{\hat{n}}$.
(Note that such a symmetry, by itself, imposes no constraints on the interaction matrix $V$.)
Then, the component of the spin along $\mathbf{\hat{n}}$ is a good quantum number, and it labels the different Fermi points.
One possibility for this labelling is
\beq
\label{eq:TR_spin}
(\phi_1, \phi_2, \phi_3, \phi_4) = (\phi_{R\uparrow}, \phi_{R\downarrow}, \phi_{L\downarrow}, \phi_{L\uparrow}) ,
\eeq
where $R/L$ distinguishes right movers from left movers, and $\uparrow / \downarrow$ denotes the spin component along $\mathbf{\hat{n}}$.
With the choice (\ref{eq:TR_spin}), the fields transform under time-reversal $\mathcal{T}$ according to Eq.~(\ref{eq:TR_def}).

Another possibility is to take
\beq
\label{eq:inv_spin}
(\phi_1, \phi_2, \phi_3, \phi_4) = (\phi_{R\uparrow}, \phi_{R\downarrow}, \phi_{L\uparrow}, \phi_{L\downarrow}) .
\eeq
With the choice (\ref{eq:inv_spin}), the fields transform under spatial inversion $\mathcal{P}$ according to Eq.~(\ref{eq:inv_def}).

In either case, one has
\beq
\begin{bmatrix}
\varphi_c \\ \vartheta_c \\ \varphi_s \\ \vartheta_s
\end{bmatrix}
= \frac{1}{2\sqrt{2}} \begin{bmatrix}
-1 & -1 & -1 & -1 \\
 1 &  1 & -1 & -1 \\
-1 &  1 & -1 &  1 \\
 1 & -1 & -1 &  1
\end{bmatrix}
\begin{bmatrix}
\phi_{R\uparrow} \\ \phi_{R\downarrow} \\ \phi_{L\uparrow} \\ \phi_{L\downarrow}
\end{bmatrix} ,
\eeq
and the inverse relation
\beq
\label{eq:charge_spin}
\begin{bmatrix}
\phi_{R\uparrow} \\ \phi_{R\downarrow} \\ \phi_{L\uparrow} \\ \phi_{L\downarrow}
\end{bmatrix}
= \frac{1}{\sqrt{2}} \begin{bmatrix}
-1 &  1 & -1 &  1 \\
-1 &  1 &  1 & -1 \\
-1 & -1 & -1 & -1 \\
-1 & -1 &  1 &  1
\end{bmatrix}
\begin{bmatrix}
\varphi_c \\ \vartheta_c \\ \varphi_s \\ \vartheta_s
\end{bmatrix} .
\eeq
Then, $\rho_s = -\frac{1}{\pi} \partial_x \varphi_s$ is indeed the slowly varying part of the density of excess spin in the $\mathbf{\hat{n}}$-direction.

If spin rotation symmetry is completely broken, on the other hand, one cannot easily reformulate the bosonic effective theory in terms of charge and spin fields.
One can of course still use the above formulae to define non-chiral fields $\varphi_s$ and $\vartheta_s$ as linear combinations of the $\phi_I$, but in general these non-chiral fields will have nothing to do with the physical spin.

Now consider a 2-channel Luttinger liquid with effective action specified by Eqs.~(2) and (10) of the main text:
\beq
S = \frac{1}{4\pi} \int dt \, dx \, \Big[ K_{IJ} \partial_t \phi_I \partial_x \phi_J - V_{IJ} \partial_x \phi_I \partial_x \phi_J \Big] ,
\eeq
where $K = \mathrm{diag}(-\mathbb{I}_2 , \mathbb{I}_2)$, and
\beq
\label{eq:V2_simple_cs}
V = \left[ \begin{array}{cc|cc}
v_+ & w & c_+ & c_0 \\
w & v_- & c_0 & c_- \\[0.15em]
\hline
c_+ & c_0 & v_+ & w \\[-0.15em]
c_0 & c_- & w & v_-
\end{array} \right] .
\eeq
To shorten subsequent expressions, let
\begin{subequations}
\begin{align}
v_{\pm} &\equiv v \pm u, \\*
c_{\pm} &\equiv c \pm b .
\end{align}
\end{subequations}
As discussed earlier, the $V$ matrix (\ref{eq:V2_simple_cs}) describes a system that has either time-reversal ($\mathcal{T}$) symmetry or spatial inversion ($\mathcal{P}$) symmetry, but not both.
Assuming that the system also has spin-rotation symmetry about some axis $\mathbf{\hat{n}}$ (as mentioned above, this assumption does not constrain $V$ at all), one can use Eqs.~(\ref{eq:TR_spin}--\ref{eq:charge_spin}) to write down the corresponding effective Hamiltonian $H$ in terms of charge and spin fields.

In the case of $\mathcal{T}$ symmetry, we use Eqs.~(\ref{eq:TR_spin}) and (\ref{eq:charge_spin}).
The result is
\begin{align}
H = \frac{1}{2\pi} \int dx \, \bigg[ 
&v_c K_c (\partial_x \vartheta_c)^2 + \frac{v_c}{K_c} (\partial_x \varphi_c)^2 \nonumber \\*
&+ v_s K_s (\partial_x \vartheta_s)^2 + \frac{v_s}{K_s} (\partial_x \varphi_s)^2 \nonumber \\*
&+ d_+ \partial_x \vartheta_c \partial_x \varphi_s + d_- \partial_x \vartheta_s \partial_x \varphi_c \bigg] ,
\end{align}
where
\begin{subequations}
\begin{align}
v_{c(s)} &\equiv [(v \pm w)^2 - (c \pm c_0)^2]^{1/2} , \\*
K_{c(s)} &\equiv \left[ \frac{v \pm w \mp c - c_0}{v \pm w \pm c + c_0} \right]^{1/2} , \\*
d_{\pm} &\equiv -2u \pm 2b .
\end{align}
\end{subequations}
Using $\Pi_c \equiv \frac{1}{\pi} \partial_x \vartheta_c$ and $\Pi_s \equiv \frac{1}{\pi} \partial_x \vartheta_s$, we recover Eq.~(13) in the main text.

In the case of $\mathcal{P}$ symmetry, we use Eqs.~(\ref{eq:inv_spin}) and (\ref{eq:charge_spin}) instead.
The result is then
\begin{align}
H = \frac{1}{2\pi} \int dx \, \bigg[ 
&v_c K_c (\partial_x \vartheta_c)^2 + \frac{v_c}{K_c} (\partial_x \varphi_c)^2 \nonumber \\*
&+ v_s K_s (\partial_x \vartheta_s)^2 + \frac{v_s}{K_s} (\partial_x \varphi_s)^2 \nonumber \\*
&+ d_+ \partial_x \varphi_c \partial_x \varphi_s + d_- \partial_x \vartheta_c \partial_x \vartheta_s \bigg] ,
\end{align}
where
\begin{subequations}
\begin{align}
v_{c(s)} &\equiv [(v \pm w)^2 - (c \pm c_0)^2]^{1/2} , \\*
K_{c(s)} &\equiv \left[ \frac{v \pm w - c \mp c_0}{v \pm w + c \pm c_0} \right]^{1/2} , \\*
d_{\pm} &\equiv 2u \pm 2b .
\end{align}
\end{subequations}

\clearpage

\section{Construction of $\infty$-stable (absolutely $\infty$-stable) Luttinger liquid phases with $N \geq 23$ ($N \geq 52$), following Plamadeala \textit{et al.} (2014)}
\label{app:plamadeala_construction}

In this section we review the approach introduced in Ref.~\cite{Plamadeala2014} to construct $\infty$-stable and absolutely $\infty$-stable phases.
Note that this construction, while elegant, is not necessarily optimal, so that $\infty$-stable or absolutely $\infty$-stable phases with fewer channels than the ones constructed below may exist.

Consider the field redefinition $\phi_I = W_{IJ} \tilde{\phi}_J$, where $W \in SL(2N,\mathbb{Z})$, the group of $2N \times 2N$ matrices with integer entries and determinant 1.
This transformation permutes the integer vectors labelling vertex operators:
\beq
\mathcal{O}_\mathbf{m} = e^{i m_I \phi_I} = e^{i \tilde{m}_I \tilde{\phi}_I} ,
\eeq 
where $\tilde{\mathbf{m}} = W^T \mathbf{m} \in \mathbb{Z}^{2N}$.
Meanwhile, the fixed-point action $S$ [Eq.~(2) of the main text], written in terms of the $\tilde{\phi}$ fields, reads
\beq
\label{seq:S_LL}
S = \frac{1}{4\pi} \int dt \, dx \, \Big[ \tilde{K}_{IJ} \partial_t \tilde{\phi}_I \partial_x \tilde{\phi}_J - \tilde{V}_{IJ} \partial_x \tilde{\phi}_I \partial_x \tilde{\phi}_J \Big] ,
\eeq
where $\tilde{K} = W^T K W$ and $\tilde{V} = W^T V W$. 
The conformal spin of the operator $\mathcal{O}_\mathbf{m} = e^{i \tilde{m}_I \tilde{\phi}_I}$ is easily seen to be
\beq
K(\mathbf{m}) = \tfrac{1}{2} \tilde{\mathbf{m}}^T \tilde{K}^{-1} \tilde{\mathbf{m}} .
\eeq
Its scaling dimension is
\beq
\Delta(\mathbf{m}) = \tfrac{1}{2} \tilde{\mathbf{m}}^T \tilde{A}^T \! \tilde{A} \, \tilde{\mathbf{m}} ,
\eeq
where $\tilde{A} \in GL(2N,\mathbb{R})$ simultaneously diagonalizes $\tilde{K}$ and $\tilde{V}$; $\tilde{A} \tilde{K} \tilde{A}^T = K$, $\tilde{A} \tilde{V} \tilde{A}^T = \text{diag}(\tilde{u}_i)$.

Now assume that $\tilde{K}$ and $\tilde{V}$ are both block-diagonal:
\begin{subequations}
\begin{align}
\tilde{K} &= -\tilde{K}_R \oplus \tilde{K}_L , \\*
\tilde{V} &= \tilde{V}_R \oplus \tilde{V}_L ,
\end{align}
\end{subequations}
with $\tilde{K}_{\nu}$ and $\tilde{V}_{\nu}$ positive definite ($\nu = R/L$).
Then we can take $\tilde{A} = \tilde{Q}_R \tilde{K}_R^{-1/2} \oplus \tilde{Q}_L \tilde{K}_L^{-1/2}$, where $\tilde{Q}_{\nu} \in SO(N)$ diagonalizes $\tilde{K}_{\nu}^{-1/2} \tilde{V}_{\nu} \tilde{K}_{\nu}^{-1/2}$; it follows that
\beq
\tilde{A}^T \! \tilde{A} = \tilde{K}_R^{-1} \oplus \tilde{K}_L^{-1} .
\eeq
Thus, if $\tilde{K}$ and $\tilde{V}$ are both block-diagonal, the conformal spin and the scaling dimension are given by
\begin{subequations}
\begin{align}
K(\mathbf{m}) &= \tilde{\Delta}^L_\mathbf{m} - \tilde{\Delta}^R_\mathbf{m} , \\*
\Delta(\mathbf{m}) &= \tilde{\Delta}^L_\mathbf{m} + \tilde{\Delta}^R_\mathbf{m} ,
\end{align}
\end{subequations}
where
\begin{subequations}
\begin{align}
\tilde{\Delta}^R_\mathbf{m} &\equiv \tfrac{1}{2} \tilde{\mathbf{m}}_R^T \, \tilde{K}_R^{-1} \tilde{\mathbf{m}}_R , \\*
\tilde{\Delta}^L_\mathbf{m} &\equiv \tfrac{1}{2} \tilde{\mathbf{m}}_L^T \, \tilde{K}_L^{-1} \tilde{\mathbf{m}}_L
\end{align}
\end{subequations}
are the \emph{right} and \emph{left scaling dimensions} of the operator. 
Here, we have split $\tilde{\mathbf{m}} = (\tilde{\mathbf{m}}_R, \tilde{\mathbf{m}}_L)$, with $\tilde{\mathbf{m}}_{R/L} \in \mathbb{Z}^N$.

By construction, $\tilde{K}_{\nu}$ ($\nu = R/L$) is a positive-definite integer matrix with determinant 1, and so the same is true of its inverse.
Thus, $\tilde{K}^{-1}_{\nu}$ can be regarded as a \emph{Gram matrix} of an $N$-dimensional \emph{unimodular integral lattice} $\tilde{\Gamma}_{\nu}$ with positive definite inner product. 
Concretely, one can take the columns of $\tilde{K}^{-1/2}_{\nu}$ to form a basis for $\tilde{\Gamma}_{\nu}$, so that the lattice vectors are $\tilde{\mathbf{v}}_{\nu} = \tilde{K}_{\nu}^{-1/2} \tilde{\mathbf{m}}_{\nu}$, $\tilde{\mathbf{m}}_{\nu} \in \mathbb{Z}^N$. 
The right/left scaling dimensions are equal to half the norm-squared of these lattice vectors,
\beq
(\tilde{\Delta}^R_\mathbf{m}, \tilde{\Delta}^L_\mathbf{m}) = (\tfrac{1}{2}\abs{\tilde{\mathbf{v}}_R}^2, \tfrac{1}{2} \abs{\tilde{\mathbf{v}}_L}^2) .
\eeq
Non-chiral operators have $\tilde{\Delta}^R_\mathbf{m} = \tilde{\Delta}^L_\mathbf{m}$ and hence $\abs{\tilde{\mathbf{v}}_R} = \abs{\tilde{\mathbf{v}}_L}$. Thus, if all nonzero lattice vectors in $\tilde{\Gamma}_R$ or in $\tilde{\Gamma}_L$ have norm-squared $> 2$ (i.e.~if at least one of the two lattices is ``non-root''),  then the corresponding Luttinger liquid phase is $\infty$-stable.
There are of course \emph{chiral} operators for which only one of $\tilde{\mathbf{v}}_R$ or $\tilde{\mathbf{v}}_L$ is nonzero. Therefore, to obtain an \emph{absolutely $\infty$-stable} phase, the lattices $\tilde{\Gamma}_{R/L}$ must both have minimum norm-squared $> 4$.

Unimodular integral lattices are \emph{self-dual}, so $\tilde{K}_{\nu}$ is also a Gram matrix of $\tilde{\Gamma}_{\nu}$ (possibly with respect to a different basis).
Therefore, $\tilde{K} = -\tilde{K}_R \oplus \tilde{K}_L$ is a Gram matrix of the unimodular integral lattice $\tilde{\Gamma}_R \oplus \tilde{\Gamma}_L$ of signature $(N,N)$. Conjugating the Gram matrix $\tilde{K}$ by $W \in SL(2N,\mathbb{Z})$ corresponds merely to a basis change in this lattice. Thus, $\tilde{\Gamma}_R \oplus \tilde{\Gamma}_L \cong \mathbb{Z}^{2N}$, the signature $(N,N)$ lattice with Gram matrix $K = -\mathbb{I}_N \oplus \mathbb{I}_N$.

Let us summarize what we have accomplished so far. We have reduced the construction of $\infty$-stable (absolutely $\infty$-stable) phases of an $N$-channel Luttinger liquid to the identification of $N$-dimensional unimodular integral lattices $\tilde{\Gamma}_{R/L}$ with minimum norm-squared $> 2$ ($> 4$), subject to the constraint that $\tilde{\Gamma}_R \oplus \tilde{\Gamma}_L \cong \mathbb{Z}^{2N}$ as a lattice of signature $(N,N)$.

We now make use of two mathematical facts.
The first fact is that there is a unique signature $(N,N)$ unimodular lattice of each parity (even/odd), where an integral lattice is \emph{even} if the norm-squared of all lattice vectors is an even integer, and is \emph{odd} otherwise \scite{Conway1999}. 
The lattice $\mathbb{Z}^{2N}$ with Gram matrix  $K = -\mathbb{I}_N \oplus \mathbb{I}_N$ is clearly odd. 
Thus, $\tilde{\Gamma}_R \oplus \tilde{\Gamma}_L \cong \mathbb{Z}^{2N}$ if (and only if) at least one of $\tilde{\Gamma}_{R/L}$ is odd. 

The second fact is that, for any positive integer $\mu$, there exists an $N$-dimensional positive definite unimodular lattice whose shortest nonzero vector has $\abs{\mathbf{v}}^2 = \mu$ \scite{Milnor1973}. The required dimension $N$ increases with $\mu$; a theorem of Rains and Sloane \scite{Rains1998} states that
\beq
\label{seq:lattice_bound}
\mu \leq 2 \lfloor N/24 \rfloor + 2,
\eeq
unless $N = 23$, in which case $\mu \leq 3$.
Here $\lfloor x \rfloor$ denotes the integer part of $x$ (i.e.~$x$ rounded down).
Thus, to obtain $\mu = 3$ requires $N \geq 23$, and to obtain $\mu = 5$ requires $N \geq 48$.

In $N=23$ dimensions, the \emph{shorter Leech lattice} $\Lambda_{23}$ has minimum norm-squared $\mu = 3$.
Correspondingly, there is an $\infty$-stable $23$-channel Luttinger liquid with $\tilde{\Gamma}_R = \tilde{\Gamma}_L = \Lambda_{23}$, dubbed the ``symmetric shorter Leech liquid'' \cite{Plamadeala2014}.
The ``symmetric'' modifier distinguishes this phase from the ``\emph{asymmetric} shorter Leech liquid'' which has $\tilde{\Gamma}_R = \Lambda_{23}$ and $\tilde{\Gamma}_L = \mathbb{Z}^{23}$, and which is also $\infty$-stable.
These phases are discussed in more detail in Ref.~\cite{Plamadeala2014}, and the remarkable transport properties of the latter were analyzed in Ref.~\scite{Plamadeala2016}.

In $N=52$ dimensions, the lattice $G_{52}$ has $\mu = 5$ \scite{Gaborit2004}, and there is a corresponding \emph{absolutely $\infty$-stable} $52$-channel Luttinger liquid with $\tilde{\Gamma}_R = \tilde{\Gamma}_L = G_{52}$.

\section{Sphere packing bounds and the non-existence of absolutely $\infty$-stable phases for $N < 11$}

The \emph{sphere packing problem} \scite{Conway1999} is to find the densest possible packing of non-overlapping spheres into $\mathbb{R}^n$.
The \emph{density} of a packing is the fraction of space that is contained inside the spheres.
Given any lattice $\Gamma \subset \mathbb{R}^n$, we can obtain an associated sphere packing by placing spheres at each lattice point, with radii equal to half the length of the shortest lattice vector.
If $\Gamma$ has a unit cell of volume $\Omega$ and shortest nonzero vector of length $2r$, then the density of the associated packing, $d_{\Gamma}$, equals the volume of an $n$-ball of radius $r$, divided by $\Omega$:
\beq
d_{\Gamma} = \frac{1}{\Omega} \, \frac{\pi^{n/2} \, r^n}{\Gamma(n/2+1)} .
\eeq
Hence, upper bounds on the density of sphere packings in $\mathbb{R}^n$ yield upper bounds on the length, $2r$, of the shortest nonzero vector in $\Gamma$.

For an $N$-channel Luttinger liquid, the scaling dimensions of bosonic vertex operators are given by $\Delta(\mathbf{m}) = \frac{1}{2} \norm{A \mathbf{m}}^2$, with $A \in SO(N,N)$ and $\mathbf{m} \in D_{2N}$, the ``checkerboard lattice'' $D_{2N} \equiv \{ \mathbf{m} \in \mathbb{Z}^{2N} : \, \abs{\mathbf{m}} \in 2\mathbb{Z} \}$.
$D_{2N}$ has unit cell volume $\Omega = 2$.
Since $\det{A} = 1$, the same holds for the deformed lattice $\Gamma \equiv A D_{2N} \subset \mathbb{R}^{2N}$.
Absolute $\infty$-stability requires every nonzero vector in $\Gamma$ to have norm-squared $> 4$, which corresponds to $r > 1$.
Thus, the corresponding sphere packing would have density
\beq
d_{\Gamma} > \frac{1}{2} \, \frac{\pi^N}{\Gamma(N+1)} .
\eeq
For $N < 11$, this contradicts known upper bounds on the density of sphere packings \cite{Cohn2003}.
Hence, absolutely $\infty$-stable phases cannot exist with $N < 11$ channels.


\makeatletter
\renewcommand\@biblabel[1]{[S#1]}
\makeatother

\end{document}